\newtheorem{example}{Example}
\newtheorem{theorem}{Theorem}
\newtheorem{lemma}[theorem]{Lemma}
\newtheorem{corollary}[theorem]{Corollary}
\newtheorem{definition}{Definition}
\newtheorem{statement}[theorem]{Statement}
\newcommand{\vx}{\vec x}
\newcommand{\xx}{\ell}
\newcommand{\vzero}{\vec 0}
\newcommand{\vv}{\vec v}
\newcommand{\HG}{\text{\rm Hist}}
\newcommand{\vA}{\bm A}
\renewcommand{\vv}{\vec v}
\newcommand{\vell}{\vec \ell}
\newcommand{\vb}{\vec b}
\newcommand{\vc}{\vec c}
\newcommand{\vP}{ P}
\newcommand{\vPM}{\vec{\omega}}
\newcommand{\vpi}{\vec{\pi}}
\newcommand{\DP}{\text{\rm{DP}}}
\newcommand{\true}{\text{\rm{true}}}
\newcommand{\false}{\text{\rm{false}}}
\newcommand{\PM}{\omega}
\newcommand{\T}{\mathsf{T}}
\newcommand{\pp}{\phi}
\newcommand{\bbR}{\mathbb{R}}
\newcommand{\bbZ}{\mathbb{Z}}
\newcommand{\CH}{\text{CH}}
\newcommand{\calA}{\mathcal{Q}}
\newcommand{\calT}{\mathcal{E}}
\newcommand{\calC}{\mathcal{C}}
\newcommand{\calH}{\mathcal{H}}
\newcommand{\sign}{\text{sign}}
\newcommand{\lnorm}{\left|\left|}
\newcommand{\rnorm}{\right|\right|}
\renewcommand{\sign}{\text{sign}}
\newcommand{\bbone}{\mathbbm {1}}
\newcommand{\Hist}{\text {Hist}}
\renewcommand{\dim}{\text {dim}}
\newcommand{\CL}{\pp}
\newcommand{\iagent}{j}
\newcommand{\numpm}{p}
\newcommand{\numagent}{n}
\newcommand{\ipm}{i}
\newcommand{\npm}{\numpm}
\newcommand{\nagent}{\numagent}
\newcommand{\setvote}{\mathcal{V}}
\newcommand{\setidx}{\bm{\Omega}}
\newcommand{\nvote}{m}
\newcommand{\hPM}{{\alpha}}
\newcommand{\hvPM}{{\vec{\alpha}}}
\newcommand{\prob}{\text{\rm{pr}}}
\newcommand{\threshold}{q}
\newcommand{\breaking}{d}
\newcommand{\vthreshold}{\vec \threshold}
\newcommand{\vbreaking}{\vec \breaking}
\newcommand{\setfrac}{\setvote_{\text{\rm frc}}}
\newcommand{\pmax}{\widetilde{\DP}_{\Pi,r_{\vthreshold,\vbreaking},f}^{\max}(\nagent)}
\newcommand{\pmin}{\widetilde{\DP}_{\Pi,r_{\vthreshold,\vbreaking},f}^{\min}(\nagent)}
\newcommand{\pmaxg}{\widetilde{\DP}_{\Pi,r,f}^{\max}(\nagent)}
\newcommand{\pming}{\widetilde{\DP}_{\Pi,r,f}^{\min}(\nagent)}
\title{The Smoothed Likelihood of Doctrinal Paradox}
\author{%
  Ao Liu and Lirong Xia \\
  Department of Computer Science, Rensselaer Polytechnic Institute\\
  \texttt{liua6@rpi.edu} and \texttt{xial@cs.rpi.edu}
}
\begin{document}

\maketitle
%\linenumbers
\begin{abstract}
When aggregating logically interconnected judgements from $\nagent$ agents, the result might be logically inconsistent. % with the logical connection. 
This phenomenon is known as the \emph{doctrinal paradox}% or {\em discursive dilemma}
, which plays a central role in the field of judgement aggregation. 
%\lirong{say something about worst-case nature and previous work on i.i.d. distributions, and why it is not enough.} 
Previous work has mostly focused on the  worst-case analysis of the doctrinal paradox, leading to many impossibility results. Little is known about its likelihood of occurrence in practical settings,  except for the study under   certain~distributions~\citep{List2005:The-probability}. 

In this paper, we characterize the likelihood of the doctrinal paradox under a  general and realistic model called {\em smoothed social choice framework} introduced in a NeurIPS 2020 paper~\citep{Xia2020:The-Smoothed}. In the framework, agents' ground truth judgements can be arbitrarily correlated, while the noises are independent.  Our main theorem states that under mild conditions, the smoothed likelihood of the doctrinal paradox is either $0$, $\exp(-\Theta(\nagent))$, $\Theta(n^{-1/2})$ or $\Theta(1)$. This not only answers open questions by~\citet{List2005:The-probability} in 2005, but also draws clear lines between situations with frequent paradoxes and with vanishing paradoxes. 

\end{abstract}

\section{Introduction}
The field of {\em judgement aggregation} studies the aggregation of agents' judgements (yes/no votes) on multiple logically interconnected propositions in a consistent way. For example, suppose a defendant is involved in a traffic accident where a victim is injured. A jury of $\nagent$  agents (jurors) are asked to make collective decisions on the following three propositions:

%\emph{Proposition $\PM_1$}: %whether the defendant did the certain action. (\emph{e.g.}, 
%whether the injury is caused by the defendant.\\
%\emph{Proposition $\PM_2$}: whether the injury is serious enough. \\
%\emph{Proposition $\PM_3$}: whether the defendant is guilty.

%\vspace{1.2mm}
\begin{center}
\begin{tabular}{l}
\noindent \emph{Proposition $\PM_1$}: %whether the defendant did the certain action. (\emph{e.g.}, 
whether the injury is caused by the defendant.\\
\emph{Proposition $\PM_2$}: whether the injury is serious enough. \\
\emph{Proposition $\PM_3$}: whether the defendant is guilty.
\end{tabular}
\end{center}
Each agent is asked to provide a yes/no judgement to each proposition in a logically consistent way, i.e., her answer to $\PM_3$ is Y if and only if her answers to both $\PM_1$ and $\PM_2$ are Y, or equivalently, $\PM_3 {\leftrightarrow} \PM_1\wedge\PM_2$ must hold. Then, agents' judgements, called a {\em profile}, are aggregated by a  rule $r$. 

One classical rule for judgement aggregation is {\em proposition-wise majority}, which performs majority voting on each proposition separately. Unfortunately, the output of proposition-wise majority can be logically inconsistent. This phenomenon is known as the {\em doctrinal paradox},   as shown in the following example. 

%\vspace{1.2mm}\noindent Each juror is required to submit a yes/no judgement for each proposition, and the judgements are required to be logically consistent, i.e.~$\PM_3$ is true if and only if both $\PM_1$ and $\PM_2$ are true. Then, the majority rule is applied to aggregate the jurors' judgements on each proposition independently from judgements on other propositions. The \emph{doctrinal paradox} as shown in the following example shows that the proposition-wise majority aggregation is not always logically consistent. %, as shown in the following simple example, known as the % or {\em discursive dilemma} 

%Suppose a series of logically interconnected propositions will be decided by aggregating the judgement of $\nagent$ agents ($\nagent$ voters). One natural and standard way is applying voting rules (\emph{e.g.}, majority rule) independently on each proposition. However, the following simple example shows that the aggregated results might be inconsistent with the logical connection between the propositions, which usually is called \emph{doctrinal paradox} or \emph{discursive dilemma}~\cite{kornhauser1993one}.

\begin{minipage}[t][][b]{0.45 \textwidth}
\begin{example}[\bf Doctrinal paradox]
\label{eg:DP} Suppose there are three three agents ($n=3$), whose judgements  
on $\PM_1$, $\PM_2$, and $\PM_3$ are shown in Table~\ref{tab:dp_intro}.  Notice that while every agent's judgements are consistent with the logic relationship $\PM_3 {\leftrightarrow} \PM_1\wedge\PM_2$,  the output of proposition-wise majority is inconsistent with the logic.
\end{example}
\end{minipage}
\hfill 
\begin{minipage}[t][][b]{0.5\textwidth}
%\begin{table}
    %\centering%\vspace{-1em}
    \begin{tabular}[t]{@{\ }r@{\ }@{\ }c@{\ }@{\ }c@{\ }@{\ }c@{\ }@{\ }|c}
    \hline
     Propositions & $\PM_1$ & $\PM_2$ & $\PM_3$& $ \PM_3 \overset{?}{\leftrightarrow} \PM_1\wedge\PM_2$ \\\hline
     Agent 1   & Y & N  & N & \textbf{True}\\
     Agent 2   & N & Y& N & \textbf{True}\\
     Agent 3   & Y& Y & Y & \textbf{True}\\\hline
     Aggregation   & Y & Y & N & \textbf{False}\\
    \hline
    \end{tabular}%\vspace{0.3em}
    %\caption{Doctrinal paradox in Example~\ref{eg:DP}.\label{tab:dp_intro}}
    \captionof{table}{\small  Doctrinal paradox.\label{tab:dp_intro}}

    %\vspace{-1.6em}
%\end{table}
\end{minipage}

%In general, the doctrinal paradox refers to the phenomenon where a group of $\nagent$ agents submit logically consistent judgements on $\npm$ {\em premises} (e.g.~$\PM_1$ and $\PM_2$ in Example~\ref{eg:DP}) and one {\em conclusion} (e.g.~$\PM_3$ in Example~\ref{eg:DP}), yet the aggregated judgements are not logically consistent.
%When all agents accept the same logical connection between the conclusion and the premises, the aggregation results may still be inconsistent with the logical connection. This inconsistency is known as the 

In general, the doctrinal paradox can happen for $p\ge 1$ {\em premises} (logically independent propositions), plus one {\em conclusion}, whose value is determined by the values of the premises. The doctrinal paradox {\em ``originated the whole field of judgement aggregation''}~\citep{Grossi2014:Judgment}, and continued to play a central role since then~\citep{List2012:The-theory,Grossi2014:Judgment,Brandt2016:Handbook}. A large body of literature has shown the negative effects of the doctrinal paradox in  law~\citep{Kornhauser1986:Unpacking,Hanna2009:The-paradox,Chilton2012:The-doctrinal}, economics~\citep{Mongin2019:The-Present}, computational social choice~\citep{Bonnefon2010:Behavioral,Brandt2016:Handbook}, philosophy~\citep{Sorensen2003:A-Brief,Mongin2012:The-doctrinal} and psychology \citep{Bonnefon2011:The-doctrinal}, etc.  
%A large body of literature on the worst-case analysis of doctrinal paradox have shown that
Unfortunately, the paradox is unavoidable under mild assumptions~\citep{List2002:Aggregating}, and there is a large body of literature on impossibility theorems as well as ways for escaping from them~\citep{List2010:Introduction,Grossi2014:Judgment,Endriss2016:Judgment}. %following the popular {\em axiomatic approach} advocated by Arrow in his celebrated impossibility theorem~\citep{Arrow63:Social}. Therefore, like in general social choice theory, significant amount of work has been put into avoiding the impossibilities. 

Despite the mathematical impossibilities, it is unclear whether the doctrinal paradox is a significant concern in practice. Empirical studies have been limited by the lack of datasets, as many classical applications of judgement aggregations are in high-stakes, low frequent, privacy-sensitive domains, such as judicial systems. To embrace AI-powered e-democracy, we believe that it is now the right time to theoretically study the likelihood of the paradox under realistic models. 
%However, well-aligned with human-AI collaboration trend, we believe that modern technologies, especially AI, have great potential to empower judgement aggregation to more frequent, larger-scale applications, which is well-aligned with recent trend of human-AI collaboration, e-democracy. 
%The data-driven approach is not suitable yet due to lack large scale and comprehensive datasets
%One classical and promising approach is to evaluate their practical likelihood of occurrence, often under probabilistic models over the profiles.  %The idea is, suppose the agents' judgements are drawn  from a distribution  over all logically consistent judgements, we ask how likely the randomly generated profile is an instance of doctrinal paradox. 
A low likelihood is desirable, because it suggests that, despite the fact that the paradox exists in the worst case, it is likeliho not be a big concern in practice. While this approach is highly popular in voting~\citep{Nurmi1999:Voting,Gehrlein2011:Voting,Diss2021:Evaluating}, it has not received due attention in judgement aggregation.  The only known exception is the paper by~\citet{List2005:The-probability}, who provided sufficient conditions for the likelihood of doctrinal paradox to converge to $1$ and to $0$, respectively, under some distributions for two or three premises.  Therefore, the following research question remains largely open:

\vspace{0.4em}
\centerline{\textbf{\noindent\em How likely does the doctrinal paradox occur under realistic models?}}

Successfully addressing this question faces two challenges. The first challenge is {\em modeling}, i.e., it is unclear what probabilistic models are realistic. A large body of literature in social choice focused on the  i.i.d.~uniform distribution, known as {\em Impartial Culture (IC)}, which has been widely criticized of being unrealistic~\citep{Nurmi1999:Voting}. The second challenge is the {\em technical hardness} in mathematically analyzing the likelihood. In fact, even characterizing the likelihood of doctrinal paradox under i.i.d.~distributions is already highly challenging, and~\citeauthor{List2005:The-probability}'s 2015 work~\citep{List2005:The-probability} left many i.i.d.~distributions as open questions, including IC. 

%The question is already highly challenging under i.i.d. assumptions. To the best of our knowledge, no result is known about the probability of doctrinal paradox for the simple case that corresponds to the {\em Impartial Culture (IC)} in voting theory, where \emph{all agents' judgements are i.i.d. uniformly distributed}. 
%To address the  modeling , i
{\noindent \bf Our Model.  }We adopt the \emph{smoothed social choice} framework~\citep{Xia2020:The-Smoothed} to address the modeling challenge. The framework was inspired by the celebrated smoothed complexity analysis~\citep{Spielman2009:Smoothed}, and is more realistic and general than i.i.d.~distributions, especially IC.   %There are $p\ge 1$ {\em premises}, which are propositions  that  each agent can freely vote for yes or no (e.g., $\PM_1$ and $\PM_2$ in Example~\ref{eg:DP}), one {\em conclusion} (e.g., $\PM_3$  in Example~\ref{eg:DP}), and a {\em logical connection function} $f$ that determines the  conclusion based on the premises. %For example, in ,  are premises, $\PM_3$ is the conclusion, and $f = \PM_3  {\leftrightarrow} \PM_1\wedge\PM_2$.
%and allows agents' ground truth judgements to be arbitrarily correlated. %From judgement aggregation point of view, our framework contains all systems with any number of agents $n\in\bbZ_{>0}$, a large group of aggregation rules, and any conclusion with arbitrary logical connection with $\npm\in\bbZ_{>0}$ premises. From the likelihood point of view, our model is more general and realistic than the i.i.d. model.
%{\vspace{2mm}\noindent \bf Smoothed Social Choice Framework. } 
We consider the setting with $p\ge 1$ premises and one conclusion determined by the logical connection function $f$, where every agent's vote is a random variable whose distribution is in a set $\Pi$ over all $2^p$ judgements on premises. Then, given the aggregation rule $r$, a max-adversary aims at maxing the likelihood of doctrinal paradox by setting agents' distributions from $\Pi$. This leads to the definition of the {\em max-smoothed likelihood} of doctrinal paradox as follows.
%the ``ground truth'' of every agent's vote is chosen by an adversary. All agents' ``ground truth'' distributions are selected from a set of distributions $\Pi$ over all kinds of votes on the premises. While discussing the probability of doctrinal paradox, we let $r$ be any aggregation rule and $f$ be any logical connection between the conclusion and the premises. We note that the ``worst-case'' of doctrinal paradox is the cases where its probability is maximized. Under smoothed social choice model, the worst-case assumes that a max-adversary aims to maximize the likelihood of doctrinal paradoxes, which is called max-smoothed likelihood of doctrinal paradoxes and denoted by 
\begin{equation}\nonumber
\pmaxg\triangleq \sup\nolimits_{\vpi\in\Pi^{\nagent}}{\Pr}_{\vP\sim\vpi}\left(\vP\text{ is a doctrinal paradox}\right),
\end{equation}
where $\vpi = \left(\pi_1,\cdots,\pi_{\nagent}\right)\in\Pi^n$ is the collection of the distributions of $\nagent$ agents chosen by the adversary, and profile $\vP$ is the collection of $\nagent$ agents' votes generated from $\vpi$. Similarly,  the {\em min-smoothed likelihood} of doctrinal paradox is defined as follows.
\begin{equation}\nonumber
\pming\triangleq \inf\nolimits_{\vpi\in\Pi^{\nagent}}{\Pr}_{\vP\sim\vpi}\left(\vP\text{ is a doctrinal paradox}\right).
\end{equation}
Notice that in both definitions, agents' ``ground truth'' preferences, represented by  $\vec\pi$, are arbitrarily correlated. The max-(respectively, min-) smoothed likelihood corresponds to the worst-case (best-case) analysis. A low max-smoothed likelihood is positive news, because it states that the paradox is unlikely to happen regardless of the adversary's choice. A high min-smoothed likelihood is strongly negative news, because it states that the paradox is likely to happen with non-negligible probability, regardless of agents' ground truth preferences.

%\lirong{1. discuss why do we care about the two smoothed DP. 2. explicitly call them max- or min- smoothed likelihood of DP}

{%\vspace{2mm}
\noindent \bf Our Contributions. } In this paper, we prove the following general characterization of the smoothed likelihood of doctrinal paradox.

{\vspace{2mm}\noindent \bf Theorem~\ref{theo:main}. }{\bf (Smoothed Likelihood of Doctrinal Paradox, Informal.)} \emph{Under mild assumptions, for any $\nagent\in\mathbb N$, any quota rule $r$, and any logical connection $f$, $\pmaxg$ is either $0$, $\exp(-\Theta(\nagent))$, $\Theta(n^{-1/2})$ or $\Theta(1)$, and $\pming$ is either $0$, $\exp(-\Theta(\nagent))$, $\Theta(n^{-1/2})$ or $\Theta(1)$. %The condition for each case will be clearly stated in our formal definition. 
}

%\lirong{formal definitions clearly states the conditions for each case}

%\lirong{How it help understanding the DP. The low $\pmaxg$ / high $\pming$ is a good/bad news (XXX case)}

{\vspace{2mm}} The formal statement of the theorem also characterizes the condition for each case to happen. As commented above, the first three cases ($0$, $\exp(-\Theta(\nagent))$, and $\Theta(n^{-1/2})$) of $\pmaxg$ are positive news, because they state that the doctrinal paradox vanishes as $n\rightarrow\infty$, see \emph{e.g.}, Example~\ref{eg:1}. The last case ($\Theta(1)$) of $\pming$ is negative news. 
%A small max-smoothed likelihood of doctrinal paradoxes implies that doctrinal paradox is rare under all cases, which is good news, see \emph{e.g.}, Example~\ref{eg:1}. Similarly, a large min-smoothed likelihood of doctrinal paradoxes implies that doctrinal paradox is common under all cases, which is usually bad news. 
Notably, a direct corollary of Theorem~\ref{theo:main} (Corollary~\ref{coro:main}) characterizes the likelihood of doctrinal paradox under i.i.d.~distributions that are not covered by~\citet{List2005:The-probability}, in particular the uniform distribution which naturally corresponds to IC in social choice. % Note that Corollary~\ref{coro:main} covers the case that all agents' distributions are uniformly at random, which is not covered in~\citeay{List2005:The-probability}.  %In comparison with the state of the art that only provides sufficient conditions for $0$ and $1$ cases, our theorem draws clear lines between the situations with frequent and with vanishing paradoxes. We provide a direct application of Theorem~\ref{theo:main} in the example follows:
%\lirong{say that we cover the IC case XXXX}
%\lirong{move example 3 to right after Theorem 1}
Experiments on synthetic data in Section~\ref{sec:exp} confirm our theoretical results. %, we also numerically verified the results in Theorem~\ref{theo:main} under generic settings.

\noindent{\bf Proof techniques.} The proof of Theorem~\ref{theo:main} addresses technical hardness in~\citep{List2005:The-probability} by first converting the smoothed doctrinal paradox to a union of polyhedra, and then applying~\citep[Theorem~2]{Xia2021:How-Likely}. Such applications can be highly non-trivial as commented in~\citep{Xia2021:How-Likely}, which we believe to be the case of this paper and are our main technical contribution.

{\vspace{2mm}\noindent \bf Related Work and Discussions. } Our work extends the results  by~\citet{List2005:The-probability} on i.i.d.~distributions in three dimensions. First, we characterize the smoothed likelihood of doctrinal paradox, which is more general and realistic than i.i.d.~distributions. %\footnote{\citet{List2005:The-probability} also obtained results under the {\em Impartial Anonymous Culture (IAC)} assumption, which is not a special case of our smoothed analysis. To the best of our knowledge, IAC is mostly of theoretical interest, and we are not aware of a previous work verifying its relevance to preference data in practice.} 
Second, our theorem works for arbitrary number of premises and arbitrary logic connection function, while \citet{List2005:The-probability}'s results only work for two or three premises and conjunctive and disjunctive functions. Third, our theorem works for arbitrary quota rules, which extends the proposition-wise majority rule in~\citep{List2005:The-probability}. As discussed above, as our theorem is a full characterization, a straightforward corollary of our theorem addresses the likelihood of doctrinal paradox under all i.i.d.~cases that are left open in~\citep{List2005:The-probability}. 
 
The modern formulation of doctrinal paradox was due to~\citet{Kornhauser1986:Unpacking}, though a similar paradox was described by~\citeauthor{Poisson1837:Recherches} in 1937~\citep{Poisson1837:Recherches} and  by~\citeauthor{Vacca1921:Opinioni} in 1921~\citep{Vacca1921:Opinioni}, see~\citep{Grossi2014:Judgment}. The paradox is closely related to the more general {\em discursive dilemma}~\citep{Pettit2001:Deliberative}, where agents also ``vote'' on the logic relationship as in the last column in Table~\ref{tab:dp_intro}. \citet{List2002:Aggregating} proved the first {impossible theorem} about doctrinal paradox, which has lead to many followup impossibility theorems~\citep{Pauly2006:Logical,Mongin2008:Factoring,Dietrich2008:Judgment,Awad2017:Judgement,Mongin2019:The-Present,Eyal-Baharad2020:The-rarity,Marcoci2020:Judgement}. Methods to escape from the paradox have been explored~\citep{Nehring2021:The-median,Rahwan2010:Collective,Nehring2008:Consistent,Lyon2013:The-wisdom}. The connection between judgement aggregation and belief merging was established~\citep{Everaere2017:An-Introduction}. Computational and machine learning aspects of judgement aggregation have also been explored, see, e.g.,~\citep{Endriss2012:Complexity,Zhang2019:A-PAC-Framework,Endriss2019:The-Complexity,Baumeister2020:Complexity}. %See~\citep{List2010:Introduction,Grossi2014:Judgment,Endriss2016:Judgment} for some surveys.

%The probability of doctrinal paradox also draws attentions from researchers. \citeauthor{List2005:The-probability}~[\citeyear{List2005:The-probability}],  \citeauthor{Bonnefon2007:How-do-individuals}~[\citeyear{Bonnefon2007:How-do-individuals}] and \citeauthor{Bonnefon2010:Behavioral}~[\citeyear{Bonnefon2010:Behavioral}] provides imperial\lirong{empirical?} results to the probability of doctrinal paradox. Far less literature studies the theoretical analysis to the probability of doctrinal paradox. \citeauthor{List2005:The-probability}~[\citeyear{List2005:The-probability}] provided the probability of doctrinal paradox under i.i.d. assumptions. However, their analysis only took limited types of distributions into accounts. Even the simple cases like i.i.d. uniform distributions are missing. Furthermore, the theorems in \citep{List2005:The-probability}  does not allow arbitrary logical connection between the premises and the conclusion, which further constrains the applications of their theorem.

%The max and min smoothed doctrinal paradox are formulated according to the smoothed social choice framework~\citep{Xia2020:The-Smoothed}, which was  inspired by the  smoothed complexity analysis~\citep{Spielman2009:Smoothed}. % The probability of ties under different voting rules is one of the applications of this framework~\cite{xia2020likely}. 

\section{Preliminaries}\label{sec:prelim}
{\noindent \bf   Basics of Judgement Aggregation. }%\lirong{I think usually ``notation'' is used in singular form.} 
Let $[\nagent] \triangleq \{1,\cdots,\nagent\}$ denote the set of $\nagent$ agents. Let $\numpm$ denote the number of premises. The binary judgement of agent $\iagent$ on the $\ipm$-th premise is denoted by $\PM_{\iagent,\ipm}\in\{0,1\}$, where $\PM_{\iagent,\ipm}=1$ means ``Yes'' and $\PM_{\iagent,\ipm}=0$ means ``No''. The conclusion is often denoted by $\CL$, and sometimes by $\PM_{\npm+1}$ for simplicity. Because all premises are binary, there are $\nvote=2^{\npm}$ different combinations of {\em judgements} on the premises. %\footnote{When discussing doctrinal paradoxes, each agent's judgement on conclusion is always consistent with their judgements on premises.} 
To better present the results, instead of asking the agents to submit their judgements $\vPM_{\iagent} = (\PM_{\iagent,1},\cdots,\PM_{\iagent,\npm})$, equivalently, we ask each agent to submit an $\nvote$-dimensional vector $\vv_\iagent$, called her \emph{vote}, whose $\vPM_{\iagent}$-th component is $1$ and all other components are $0$'s. Let $\setvote \triangleq \left\{\vv\in\{0,1\}^{\nvote}:\nvote = 2^\numpm\text{ and }\lnorm\vv\rnorm_1=1\right\}$ denote the set of all  votes.  For example, in the setting of Example~\ref{eg:DP}, we have $p=2$, $m=4$, and the four judgements are $((0,0),(0,1),(1,0),(1,1))$. Agent $1$'s judgements are $(1,0)$, which means that her vote is $(0,0,1,0)$, where the third component corresponds to the weight on judgements $(1,0)$.

A probability distribution  over the $2^\numpm$ judgements is also called a {\em fractional vote}. Let $\setfrac \triangleq \left\{\vv\in[0,1]^{\nvote}:\lnorm\vv\rnorm_1=1\right\}$ denote the set of all fractional votes. Clearly,  $\setvote\subseteq \setfrac$.  For any fractional vote $\vv$ and any judgement $\vPM$,  we let $\vv(\vPM)$ denote the weight (probability) on $\vPM$. For example, agent $2$'s vote in Table~\ref{tab:eg2} below is a fractional vote with $0.5$ weight on each of $(1,0)$ and $(1,1)$. %Moreover, we require that $\vv^{\T}\, \vone = 1$  %\lirong{I think all vectors should be row vectors. Plus, dot product does not need transpose.} 
%(see Example~\ref{eg:DP_our}, all vectors in this paper are column vectors by default). Let $\setfrac \triangleq \left\{\vv\in[0,1]^{\nvote}:\lnorm\vv\rnorm_1=1\right\}$ denote the set of all fractional votes. 
 %The definition of $\vPM_{\iagent}$-th component of vectors can be found in Appendix~\ref{sec:extra_notations}.  

%\lirong{Looks like $\setvote$ is not the set of all fractional votes. Is it by design?}

%{\vspace{2mm}\noindent \bf Basic Settings.} 
Let  $f:\{0,1\}^{\npm}\to\{0,1\}$ denote  the {\em logical connection} between the premises and the conclusion. That is, for any judgements $\vPM\in\{0,1\}^{\npm}$, $f(\vPM)$ is the conclusion that is consistent with the logic.  The domain of $f$ can be naturally extended to  (non-fractional) votes $\setvote$. That is, 
\begin{equation}\nonumber
\forall\,\iagent\in[\nagent],\;  \CL_{\iagent} = f(\vv_{\iagent}) = f(\vPM),\text{ where }\vv(\vPM)=1.
\end{equation}
For example, in the setting of Example~\ref{eg:DP}, the conclusion is $1$ if and only if both premises are $1$. Therefore,  $
f(\vv)= \left\{\begin{array}{lcl}
1      &      & \text{if } \vv = (0,\,0,\,0,\,1)\\
0      &      & \text{otherwise}
\end{array}\right..$
%Here $\CL_{\iagent}\in\{0,1\}$  reprsents agent $\iagent$'s conclusion. 
%We slightly abuse the definition of $f$ and allow it to take agents' (non-fractional) judgements $\vPM$ as input. Formally,  
%\begin{equation}\nonumber
%\forall\,\vPM\in\{0,1\}^{\npm},\;f(\vPM)\triangleq f(\vv),\text{ where }\vv(\vPM)=1
%\end{equation} %To simplify notations, we also use $f$ to denote the binary function mapping from $\{0,1\}^\npm$ to $\{0,1\}$. 

For any $1\le i\le p+1$, let $\setidx_{\ipm}$ denote the set of all judgements whose $i$-th proposition is $1$. That is,
\begin{equation}\nonumber
\begin{split}
\setidx_{\ipm} \triangleq \left\{\begin{array}{ll}
\big\{\vPM\in\{0,1\}^{\npm}:\;\,\vPM(\ipm)=1\big\}      &       \forall \ipm\in[\npm]\\
\big\{\vPM\in\{0,1\}^{\npm}:\,f(\vPM)=1\big\}      &       \text{if }\ipm = \npm+1
\end{array}\right.. 
\end{split}
\end{equation}
For example, in the setting of Example~\ref{eg:DP}, $\setidx_{1} = \{(1,0), (1,1)\}$, $\setidx_{2} = \{(0,1), (1,1)\}$, and $\setidx_{3} = \{(1,1)\}$.
%In this paper we only study {\em anonymous} judgement aggregation rules, which means that the outcome only depends on the histogram of the profile rather than agents' identities. Therefore, 

%The vector (more precisely, matrix) of $\nagent$ agents' votes $\vP = (\vv_{1},\cdots,\vv_{\nagent})\in(\setfrac)^{\nagent}$ is called a \emph{profile}. %$\vP$ will also be called a \emph{fractional profile} if at least one agent's vote is fractional. 
For any $n\in\mathbb N$, let $\vP = (\vv_{1},\cdots,\vv_{\nagent})\in(\setfrac)^{\nagent}$ denote a {\em (fractional) profile} of $n$ agents.
A {\em (judgement) aggregation rule} is a function $r:(\setfrac)^{\nagent}\to\{0,1\}^{\npm+1}$, which takes a profile as input and outputs binary values for all premises and the conclusion.   For any (fractional) profile $\vP\in(\setfrac)^{\nagent}$, we define $\HG(\vP) \triangleq \sum_{\iagent\in[\nagent]}\vv_{\iagent}$  to be the {\em histogram} of $P$, which represents the total weight of each combination of judgements in $P$. We define  fractional votes and fractional profiles to present conditions in the theorem. Notice that agents' votes are in $\setvote$, which means that they are  non-fractional.  %\lirong{Have ``profile'' and ``aggregation rule '' been formally defined? IF not, define them.}
%Similarly, define $\setidx_{\npm+1}$ as the set of all indices for the votes with ``1'' on the conclusion: 
%\begin{equation}\nonumber
%\setidx_{\npm+1} \triangleq \big\{\vPM\in\{0,1\}^{\npm}:\;f(\vPM)=1\big\}.
%\end{equation}

%\lirong{give descriptions of quota in main text and put the definition to appendix}
{ \noindent \bf Quota Rules. } A quota rule is a natural and well-studied generalization of proposition-wise majority rule with different thresholds.  
%Like the majority rule, quota rule also has the property of \emph{anonymity}, %(each voter is treated identically), 
%\emph{neutrality}, %(each alternative is treated identically), 
%\emph{independence} %(outputs unique winner) 
%and \emph{monotonicity} (see Appendix~\ref{sec:extra_quota} for formal definitions).\lirong{We probably don't need to mention these properties in the paper.} %(adding vote to the current winner will never change the winner).
%This paper focuses on quota rules that independently aggregate agents' judgements on the conclusion and premises. 
Formally, given any vector of \emph{acceptance thresholds} (\emph{threshold} in short), denoted by $\vthreshold = (\threshold_1,\cdots,\threshold_{\npm+1})\in[0,1]^{\npm+1}$ and any vector of \emph{tie-breaking criteria}  (\emph{breaking} in short), denoted by  $\vbreaking = (\breaking_1,\cdots,\breaking_{\npm+1})\in\{0,1\}^{\npm+1}$, we define the quota rule $r_{\vthreshold,\vbreaking}(\vP)$ as follows. For any  profile $\vP\in(\setfrac)^{\nagent}$ and any $i\in [\npm+1]$, we let $\nagent_{\ipm} = \sum_{\vPM\in\setidx_i}\HG(\vP)(\vPM)$ denote the total weight of agents whose $i$-th judgement  is $1$, then apply the quota rule with threshold $q_i$ and breaking $d_i$. That is,
%$r_{\vthreshold,\vbreaking}(\vP)(\ipm)$\lirong{we should not include the subscrips $\vthreshold,\vbreaking$.} be the outcome of quota for $\PM_{\ipm}$ and let $\nagent_{\ipm} = \sum_{\vPM\in\setidx}\HG(\vP)(\vPM)$ be the number of agents with the judgement\footnote{When $\vP$ is a fractional profile, $\nagent_\ipm$ is the total weight for the judgement of $\PM{\ipm}=1$} of $\PM_{\ipm}=1$. Quota rule $r_{\vthreshold,\vbreaking}$ is defined as follows:
\begin{equation}\label{equ:quota}
\forall\;\ipm\in[\npm+1],\;r_{\vthreshold,\vbreaking}(\vP)(\ipm) \triangleq \left\{\begin{array}{lcl}
1      &      & \text{if }  \nagent_\ipm > \threshold_{\ipm}\cdot n\\
\breaking_\ipm      &      & \text{if }  \nagent_\ipm = \threshold_{\ipm}\cdot n\\
0      &      & \text{otherwise}
\end{array}\right.,
\end{equation}
%where $\vthreshold = (\threshold_1,\cdots,\threshold_{\npm+1})\in[0,1]^{\npm+1}$ are called \emph{acceptance thresholds} (or \emph{threshold} in short) and  $\vbreaking = (\breaking_1,\cdots,\breaking_{\npm+1})\in\{0,1\}^{\npm+1}$ are called the \emph{breaking criteria}  (or \emph{breaking} in short). 
where $r_{\vthreshold,\vbreaking}(\vP)(\ipm)$ is the $\ipm$-th component of $r_{\vthreshold,\vbreaking}(\vP)$. We say that a profile $\vP$ is {\em tied in $\PM_{\ipm}$} if   $\nagent_\ipm = \threshold_\ipm\cdot\nagent$.
% The formal definition of quota rule, which is compatible with fractional profiles, can be found in Appendix~\ref{sec:extra_quota}.

%For any $\ipm\in[\npm+1]$, quota rule will return $1$ (or $0$) on premise $\PM_{\ipm}$ if the more than (or less than) $q_{\ipm}$-ratio of the agents vote $1$ on this premise. Quota rule will return $b_{\ipm}$ on premise $\PM_{\ipm}$ if exactly $q_{\ipm}$-ratio of the agents vote $1$ on this premise. We say $\threshold_{\ipm}$ and $\breaking_{\ipm}$ are the \emph{acceptance threshold} and the \emph{breaking criteria} of $\PM_{\ipm}$ respectively. $\vthreshold = (\threshold_1,\cdots,\threshold_{\npm+1})\in[0,1]^{\npm+1}$ and  $\vbreaking = (\breaking_1,\cdots,\breaking_{\npm+1})\in\{0,1\}^{\npm+1}$ represents all parameters for quota rule $r_{\vthreshold,\vbreaking}$. 
%where $r_{\vthreshold,\vbreaking}(\vP)(\ipm)$ is the $\ipm$-th component of $r_{\vthreshold,\vbreaking}(\vP)$. 

{\noindent \bf   Doctrinal Paradox. } A profile $\vP$ is said to be a doctrinal paradox  under   $r$, if  $r(\vP)$ is inconsistent with $f$. Formally, we have the following definition.
\begin{definition}[\bf Doctrinal paradox]\label{def:DP}
Given any $n\in{\mathbb N}$, any logical connection function $f$, and any quota rule $r$,  a profile $\vP\in\setfrac^n$ is a {\em doctrinal paradox}, if $
r(\vP)(\npm+1) \neq f\big(r(\vP)(1),\cdots,r(\vP)(\npm)\big).$
%where $\hvPM\in\setvote$ and $\hvPM(r_{\vthreshold,\vbreaking}(1),\cdots,r_{\vthreshold,\vbreaking}(\npm))=1$.
\end{definition}
%\noindent Because doctrinal paradox only depends on the aggregation result, 
%We also say that an {\em aggregation result} $\hvPM = (\hPM_1,\cdots,\hPM_{\npm+1})$ is a doctrinal paradox if  $\hPM_{\npm+1} \neq f(\hPM_1,\cdots,\hPM_{\npm})$.\lirong{Just call it (logically) inconsistent? an outcome does not constitute a paradox.} Note  that $\hvPM$ consists of $p+1$ elements, which consists of $p$ aggregated  premises and one aggregated conclusion. %See Example~\ref{eg:DP_our} for an illustration of our notations and doctrinal paradox under our setting.
The following example illustrates a fractional profile, a quota rule, and the aggregated judgements.

\begin{minipage}[t][][b]{0.3\textwidth}  \begin{example}\label{eg:DP_our}
A (fractional) profile $P$ of three votes, a quota rule $r$ (which uses the majority rule with the breaking in favor of $1$ on both premises and in favor of $\,0$ on the conclusion),  and the aggregation is shown in Table~\ref{tab:eg2}. 

It can be seen from the table that $r(P)=(1,1,0)$, which is inconsistent w.r.t.~$f$. Therefore, $P$ is a doctrinal paradox.
\end{example}
\end{minipage}
\ \ \ \ \ \ 
\begin{minipage}[t][][b]{0.7\textwidth} 
    \begin{tabular}[t]{|@{\ }ccccc@{\ }c@{\ }|}
\hline
      &weight & $\PM_1$ & $\PM_2$  & $\CL$&  $\vv$\\\hline
     \multicolumn{1}{|c|}{Agent 1} & \multicolumn{1}{|c|}{1}  &\multicolumn{1}{|c|}{$1$} & $0$ &  \multicolumn{1}{|c|}{0}  & (0,\,0,\,1,\,0)\\\hline
     \multicolumn{1}{|c|}{Agent 2}   &\multicolumn{1}{|c|}{1} & \multicolumn{1}{|c|}{$0$} & $1$  & \multicolumn{1}{|c|}{0} &  (0,\,1,\,0,\,0)\\\hline
     \multicolumn{1}{|c|}{\multirow{2}*{Agent 3}}   & \multicolumn{1}{|c|}{0.5}  &\multicolumn{1}{|c|}{$1$} & $0$  & \multicolumn{1}{|@{}c@{}|}{0} &  \multirow{2}*{(0,\,0,\,0.5,\,0.5)}\\%\cline{2-5} %\cline{6-6}
     \multicolumn{1}{|c|}{~}   & \multicolumn{1}{|c|}{0.5} &\multicolumn{1}{|c|}{$1$} & $1$ & \multicolumn{1}{|c|}{1} &  ~ \\\hline
     $\HG(\vP)$ & & & & (0,\,1,\,1.5,\,0.5) &\\
     $\nagent_{\ipm}$    && $2$ & $1.5$ &    0.5&   \\\hline
     Breaking $\vbreaking  $   && $1$ & $1$ &    0  &\\
     Threshold $\vthreshold $   && $0.5$& $0.5$ &    0.5 & \\
     $\vthreshold\cdot\nagent$  & & $1.5$ & $1.5$ &   1.5  & \\\hline
    $r_{\vthreshold,\vbreaking}(\vP) $  &  & $1$ & $1$ &    0  &\\
    %\bottomrule
    \hline
    \end{tabular} 
    \captionof{table}{The profile,   rule, and aggregation result for Example~\ref{eg:DP_our}.\label{tab:eg2}}
 \end{minipage}

\iffalse
{\vspace{2mm}\noindent \bf Smoothed judgement Aggregations. } All discussions in this paper does not require agents' votes to be i.i.d. We assume that there exist one ``adversary'' able to choose any agents' distributions over $\setvote$. All agents distributions are selected from a set of distribution $\Pi$. The max-adversary aims to maximize the probability of doctrinal paradox, which is denoted by $$\DP_{\Pi}^{\max}(n,\calA,f)\triangleq \sup_{\pi\in\Pi^{\nagent}}\Pr_{\vP\sim\pi}\left(\vP\text{ is doctrinal paradox}\right),$$
where $\pi = \left(\pi_1,\cdots,\pi_{\nagent}\right)$ is the collection of the distributions of $\nagent$ agents in the system. Symmetrically, a min-adversary aims to minimize the probability of doctrinal paradox, which is denoted by,
$$\DP_{\Pi}^{\min}(n,\calA,f)\triangleq \inf_{\pi\in\Pi^{\nagent}}\Pr_{\vP\sim\pi}\left(\vP\text{ is doctrinal paradox}\right).$$
\fi

\section{Characterization of  Smoothed Likelihood of Doctrinal Paradox}
{\bf Intuition.} Before formally present the theorem, we first take a high-level and intuitive approach to introduce the idea behind the characterization and highlight the challenges. Take the max part $\pmaxg$ for example. Suppose the (max-)adversary has chosen  $\vec\pi=(\pi_1,\ldots,\pi_n)\in \Pi^n$. Let $\mathring \pi = \sum_{j=1}^n\pi_j/n$ denote the center of $\vec\pi$. Intuitively,  various multivariate central limit theorems imply that the histogram of the profile $P$ generated from $\vec \pi$ falls in an $\Theta(n^{-0.5})$ neighborhood of $n\mathring\pi$, denoted by ${\cal N}_{n\mathring\pi}$, with high probability.  As an approximation and relation,  for now we allow the max-adversary to choose any $\mathring\pi$ in the convex hull of $\Pi$ (denoted by $\CH(\Pi)$). Then, we can focus on the likelihood of doctrinal paradox in ${\cal N}_{n\mathring\pi}$, which leads to the following four cases.

{\bf Case 1.} If no profile of $n$ votes is a doctrinal paradox, then $\pmaxg=0$ by definition.

{\bf Case 2.}  Otherwise, if for all $\mathring\pi\in \CH(\Pi)$, ${\cal N}_{n\mathring\pi}$ does not contain a doctrinal paradox, then $\pmaxg$ is small. Notice that $\pmaxg$ is not zero, because there is a small (exponential) probability that $\Hist(P)$ is not in ${\cal N}_{n\mathring\pi}$.

{\bf Case 3.}  Otherwise, if the adversary can only choose  $\mathring\pi\in \CH(\Pi)$ that corresponds to a ``non-robust'' instance of doctrinal paradox, in the sense that ${\cal N}_{n\mathring\pi}$ includes some but not too many doctrinal paradoxes. Then, $\pmaxg$ should be larger than that in the previous case. This is the most challenging case, because it is unclear how to characterize conditions for it to happen, and how large the likelihood is. While standard techniques may lead to an $O(n^{-0.5})$ upper bound, it is unclear if $\pmaxg$ can be lower.

{\bf Case 4.}  Otherwise, the adversary can choose $\mathring\pi\in \CH(\Pi)$ that corresponds to a ``robust'' instance of doctrinal paradox, such that ${\cal N}_{n\mathring\pi}$ contains many paradoxes. In this case, $\pmaxg$ is large.  

Case 1 is trivial. It turns out that Case 2 and Case 3 can be characterized by properties of distributions in the convex hull of $\Pi$. %To tackle Case 3, we will first represent doctrinal paradox as a union of finitely many polyhedra, then apply~\cite[Theorem~2]{Xia2021:How-Likely}, as shown in Section~\ref{sec:tech_proof}.
%The main theorem for the smoothed probability of doctrinal paradox is presented in this section. 
%We note that any probability distribution  $\pi$ over  $\setvote$ can be viewed as a fractional profile with a single vote, whose $\vPM$-th dimension is $\pi(\vPM)$. Therefore, any quota rule $r$ can be applied to the profile that consists of a single vote $\pi$, and we let $r(\pi)$ denote the aggregation result for simplicity. To present the main theorem, 
We first define {\em effective refinements}  of distributions to reason about aggregated judgements in ${\cal N}_{n\mathring\pi}$.

%To better present our main theorem, we first define the quota rule\lirong{why do we need to define the quota rule again?} and \emph{tie} for a distribution $\pi$ over  $\setvote$. Let $\pi(\vPM)$ denote the probability for an agent's judgement to be $\vPM$. We slightly abuse the notation of $\pi$ and let it denote a $\nvote$-dimensional vector whose $\vPM$-th dimension is $\pi(\vPM)$. One can see that any distribution has the property of $\pi^{\T}\cdot\vone = 1$ and can be treated as a fractional profile. Then, the quota rule for distributions is defined in the way that treating all agents' distributions as fractional profiles. 

\iffalse
For any $\ipm\in[\npm+1]$, we define $\prob_\ipm \triangleq \sum_{\vell\in\setidx_{\ipm}}\pi(\vell)$ as the probability for an agent's $i$-th judgement to be $1$, i.e.~$\PM_\ipm=1$. Using the same notations as the definition of quota for profiles, the quota rule $r$ for distributions is defined as follows:
\begin{equation}\nonumber
\forall\;\ipm\in[\npm+1],\;r(\pi)(\ipm) \triangleq \left\{\begin{array}{lcl}
1      &      & \text{if }  \prob_\ipm > \threshold_{\ipm}\\
\breaking_\ipm      &      & \text{if }  \prob_\ipm = \threshold_{\ipm}\\
0      &      & \text{otherwise}
\end{array}\right..
\end{equation}
We say distribution $\pi$ is \emph{tied} in $\PM_{\ipm}$ if and only if $\prob_{\ipm} = t_{\ipm}$. 
\fi
\begin{definition}[\bf (Effective) Refinements]\label{def:breaking}
For any distribution $\pi$ over $\setvote$, any $n\in{\mathbb N}$, and any
quota rule $r$, we say  that a vector $\hvPM = (\hPM_1,\cdots,\hPM_{\npm+1})\in \{0,1\}^{\npm+1}$ is an {\em effective refinement of $\pi$ (at $n$)}, if  the following two conditions hold:\vspace{-0.2em}
\begin{equation}\nonumber
\begin{split}
(1)\;&\forall\,\ipm\in[\npm+1]\;\text{such that}\;\pi\text{ is not tied in }\PM_{\ipm},\;\hPM_{\ipm} = r(\pi)(\ipm), and \\
(2)\;&\exists\,\vP\in\setvote^{\nagent}\;\text{such that}\; r(\vP) = \hvPM.
\end{split}
\end{equation}% \vspace{-0.4em}
If condition (1) holds, then we call $\hvPM$  a {\em refinement} of $\pi$. Let $\calT_{\pi}$ 
%\begin{equation}\nonumber
%\calT_{\pi} = \big\{\hvPM\in \{0,1\}^{\npm+1}: \hvPM\text{ is an effective refinement of } \pi\big\}
%\end{equation}
denote the set of all effective refinements of $\pi$.
\end{definition}
%\noindent\lirong{explain in words.} %We will give Example~\ref{eg:1} and Example~\ref{eg:2} to better present the definition of effective refinements.
In words, condition~(1) requires $\hvPM$ to match the aggregated judgements on all non-tied propositions. Condition~(2) requires the existence of a profile of $n$ votes, whose outcome under $r$ is $\hvPM$.
\begin{example}
In the setting of Example~\ref{eg:DP_our},  $\pi = (0.3,0.2,0,0.5)$ is tied in the first premise $(\PM_1)$ and the conclusion. There are four refinements of $\pi$: $(0,1,0)$, $(0,1,1)$ $(1,1,0)$ and $(1,1,1)$.  $(0,1,1)$ is not effective, because no profile can make conclusion $\CL=1$ while keeping $\PM_1$ to be $0$. The other three refinements are effective when $n\geq 3$. For example, $(1,1,1)$ is effective because it is the aggregation of the profile where all agents have $1$ judgements on both premises.
\end{example} 
%\lirong{example here}
Recall that $\CH(\Pi)$ denotes the convex hull of $\Pi$. Next, we define four conditions ($\kappa_1$ to $\kappa_4$) to present Theorem~\ref{theo:main}.  While their formal definitions may appear technical, we believe that they have intuitive explanations as discussed right afterward. In fact, $\kappa_1$, $\kappa_2$, and $\kappa_4$ correspond to case 1,2,3 discussed in the beginning of this section, respectively. $\kappa_3$ is defined for min-smoothed likelihood. 

\begin{definition}[\bf \boldmath Conditions $\kappa_1$ to $\kappa_4$] Given $p$ premises, a logical connection function $f$, $n\in\mathbb N$, and  a quota rule $r_{\vthreshold,\vbreaking}$, we define the following conditions.

%\vspace{1.5mm}
\noindent {\boldmath $\kappa_1$:} $\forall\,\vP\in\setvote^{\nagent}$, $\vP$ is not a doctrinal paradox. %That is, no profile of $n$ votes is a doctrinal paradox.

%\vspace{1mm}
\noindent {\boldmath$\kappa_2$:} $\forall\,\pi\in\CH(\Pi)$, $\forall\, \hvPM\in\calT_{\pi}$, $\hvPM$ is consistent (according to $f$). %That is, any effective refinements of any distributions in the convex hull of $\Pi$  must be consistent. Intuitively, $\kappa_2$ requires that $\CH(\Pi)$ is ``far'' away from any inconsistent judgements.

 \noindent {\boldmath$\kappa_3$:} $\exists\,\pi\in\CH(\Pi)$ such that $\forall \hvPM\in\calT_{\pi}$, $\hvPM$  is consistent (according to $f$).
 %is not a doctrinal paradox. That is, some distribution in the convex hull of $\Pi$ is ``far'' from all doctrinal paradoxes.

\noindent {\boldmath$\kappa_4$:} $\exists\,\ipm\in[m]$ such that ``$\PM_{\npm+1} \leftrightarrow \PM_{\ipm}$ and $\threshold_{\npm+1} = \threshold_{\ipm}$'' or ``$\PM_{\npm+1} \leftrightarrow \neg\,\PM_{\ipm}$ and $\threshold_{\npm+1} = 1-\threshold_{\ipm}$''. 
\end{definition}
Intuitively, $\kappa_1$ requires that no profile of $n$ votes is a doctrinal paradox. $\kappa_2$ requires that the output of $r_{\vthreshold,\vbreaking}$ of every distribution $\pi$ in $\CH(\Pi)$ is ``far away'' from any inconsistent judgements, in the sense that all judgements in $\calT_{\pi}$ are consistent. $\kappa_3$ is weaker than $\kappa_2$, because $\kappa_3$ only requires that the condition in $\kappa_2$   holds for {\em some} $\pi\in\CH(\Pi)$.  
$\kappa_4$ says that the conclusion only relies on one premise and the thresholds are consistent with the relationship between the conclusion and the premise. %Let us look an example.

\begin{example}[\bf\boldmath Conditions $\kappa_1$-$\kappa_4$]\label{eg:2}
Let $f$ and $r$ be the same as in  Example~\ref{eg:DP_our}. Let $\Pi = \{\pi_1 = (0.25,0.25,0.25,0.25), \pi_2 = (0.04,0.32,0.32,0.32)\}$. When $n=1$, the paradox does not hold by definition. In the rest of this example, we assume $n\ge 2$.  
%Let us examine conditions $\kappa_1$ to $\kappa_4$ as follows:\\

\textbf{\boldmath$\kappa_1$   is false.} When $n=2$, $\vP = (1,0,0,1)$ is the only doctrinal paradox. When $n\geq3$, $\vP = (n+1-2\lfloor n/3\rfloor-\lceil n/3\rceil,\lfloor n/3\rfloor,\lfloor n/3\rfloor,\lceil n/3\rceil-1)$ is a doctrinal paradox.\\ 
\textbf{\boldmath$\kappa_2$ is false.} $\CH(\Pi) = \{\pi=a\cdot\pi_1+(1-a)\cdot\pi_2:\,a\in[0,1]\}$. When $a\neq0$, $\pi$ is a doctrinal paradox and contains no ties. When $a=0$, we have $\pi=\pi_1$, and it is tied in conclusion and both premises. Its effective refinement $\hvPM = (1,1,0)$ is a doctrinal paradox.\\ 
\textbf{\boldmath$\kappa_3$ is false.} $\pi_2$ is a doctrinal paradox and contains no ties.\\ 
\textbf{\boldmath$\kappa_4$ is false} according to the definitions of $f$ and $r$.
\end{example}

We say that a distribution $\pi$ is \emph{strictly positive}, if there exists $\epsilon>0$ such that for every $\vPM\in\{0,1\}^{\npm}$, $\pi(\vPM)\geq \epsilon$. A set of distributions $\Pi$ is \emph{strictly positive}, if there exists $\epsilon>0$ such that every $\pi\in\Pi$ is strictly positive (by $\epsilon$).  Being {strictly positive} is a mild requirement for judgement aggregation (c.f.~the argument for voting~\citep{Xia2020:The-Smoothed}). $\Pi$ is \emph{closed} if it is a closed set in $\bbR^{\nvote}$. %because any distributions $\pi$ can be arbitrarily close to a strictly positive distribution $\pi' = (1-\epsilon)\pi+\epsilon\cdot \mu$, where $\mu$ is the uniform distribution. 
We now present the main theorem of this paper.

\begin{theorem}[\bf Smoothed Likelihood of Doctrinal Paradox]\label{theo:main}
Given any closed and strictly positive set $\Pi$ of distributions over $\setvote$, any logical connection function $f:\setvote\to\{0,1\}$, and any quota rule $r$. For any $\nagent\in\mathbb N$,\vspace{-0.7em}
{%\small
\begin{equation}\nonumber
\begin{split}
&\pmaxg=
\left\{\begin{array}{ll}
0      &       \text{if $\kappa_1$ is true}\\
\exp(-\Theta(\nagent))        & \text{otherwise, if $\kappa_2$ is true} \\
\Theta(\nagent^{-1/2})      &   \text{otherwise, if $\kappa_4$ is true}\\
\Theta(1)      &   \text{otherwise}\\
\end{array}\right.\\
%\end{split}
%\end{equation}
%\begin{equation}\nonumber
%\begin{split}
&\pming=
\left\{\begin{array}{ll}
0      &       \text{if $\kappa_1$ is true}\\
\exp(-\Theta(\nagent))        & \text{otherwise, if $\kappa_3$ is true} \\
\Theta(\nagent^{-1/2})      &   \text{otherwise, if $\kappa_4$ is true}\\
\Theta(1)      &   \text{otherwise}\\
\end{array}\right..
\end{split}
\end{equation}}
\end{theorem}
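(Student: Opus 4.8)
The plan is to reduce the doctrinal-paradox event to a statement about where the random histogram $\HG(\vP)$ lands, and then invoke the polyhedral characterization of~\citep[Theorem~2]{Xia2021:How-Likely}. First I would observe that for a quota rule the aggregate outcome $r(\vP)$ is a function of $\HG(\vP)$ alone, determined by the linear comparisons between each $\nagent_\ipm$ and $\threshold_\ipm\cdot\nagent$. Consequently the set of histograms yielding a fixed outcome $\hvPM$ is a polyhedron $\calH_{\hvPM}$ cut out by these (in)equalities, and ``$\vP$ is a doctrinal paradox'' is exactly the event $\{\HG(\vP)\in\calH_{\text{para}}\}$ with $\calH_{\text{para}}=\bigcup_{\hvPM\text{ inconsistent}}\calH_{\hvPM}$. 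Since each $\vv_\iagent$ is an independent draw from a distribution in $\Pi$ and $\HG(\vP)$ is their sum, $\HG(\vP)$ is a Poisson-multinomial vector, so $\pmaxg$ and $\pming$ are precisely the $\sup$ and $\inf$ over $\vpi\in\Pi^{\nagent}$ of the probability that this vector lands in the union of polyhedra $\calH_{\text{para}}$. This is the exact setting of~\citep[Theorem~2]{Xia2021:How-Likely}, whose hypotheses (closedness and strict positivity of $\Pi$, and integrality of the lattice on which $\HG$ lives) are furnished by the assumptions of the theorem.

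The second step is to read off the four regimes from the geometry that~\citep[Theorem~2]{Xia2021:How-Likely} tracks: whether $\calH_{\text{para}}$ contains a reachable integer point, and, if so, the dimension of the highest face of $\calH_{\text{para}}$ whose relative interior meets $\nagent\cdot\CH(\Pi)$. Emptiness of reachable integer points is exactly $\kappa_1$, giving the $0$ case on both sides. The bridge to $\kappa_2$ and $\kappa_3$ is the notion of effective refinement: I would prove that $\calT_\pi$ is precisely the set of aggregate outcomes attained by integer histograms in the $\Theta(\nagent^{-1/2})$-neighborhood of $\nagent\mathring\pi$ --- on every non-tied coordinate the outcome is forced to $r(\pi)(\ipm)$, while on each tied coordinate the histogram may fall on either side of, or exactly on, the threshold hyperplane. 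Hence ``every member of $\calT_\pi$ is consistent'' is equivalent to ``the neighborhood of $\nagent\mathring\pi$ contains no paradox''. Quantifying this universally over $\pi\in\CH(\Pi)$ yields $\kappa_2$ (no center is near a paradox, so the max-adversary needs a large deviation, giving $\exp(-\Theta(\nagent))$), and existentially yields $\kappa_3$ (the min-adversary hides at a clean center, giving $\exp(-\Theta(\nagent))$); in both cases the value is nonzero precisely because $\kappa_1$ fails.

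The decisive and hardest step is separating the $\Theta(\nagent^{-1/2})$ regime from the $\Theta(1)$ regime, which amounts to showing that $\kappa_4$ characterizes exactly when every reachable inconsistency sits on a single threshold hyperplane (codimension $1$) rather than in a full-dimensional region. If $\kappa_4$ holds then $f$ coincides with a single premise under an aligned threshold, so off that premise's tie the conclusion and that premise always agree; every paradox therefore forces the equality $\nagent_\ipm=\threshold_\ipm\cdot\nagent$, the associated active face has codimension $1$, and~\citep[Theorem~2]{Xia2021:How-Likely} returns the exponent $\nagent^{-1/2}$. For the min side one additionally checks that, under $\kappa_4$, a center can have an inconsistent refinement only if it is tied on that coordinate, so $\neg\kappa_3$ forces $\CH(\Pi)$ onto this very hyperplane and the adversary cannot escape to a clean center, again giving $\Theta(\nagent^{-1/2})$. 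The converse --- $\neg\kappa_4$ forces a full-dimensional paradox --- is the main obstacle: I would argue that if $f$ is \emph{not} equivalent to any single aligned premise, then one can construct a center $\mathring\pi\in\CH(\Pi)$ whose premise-marginals lie \emph{strictly} on prescribed sides of their thresholds while the conclusion-marginal lies strictly on the opposite side of $\threshold_{\npm+1}$, producing a tie-free inconsistent outcome all of whose defining inequalities are strict; this is a full-dimensional face meeting $\nagent\cdot\CH(\Pi)$, and the theorem returns $\Theta(1)$ (on the min side, because $\neg\kappa_4$ makes this construction available near \emph{every} center, $\neg\kappa_3$ makes it unavoidable). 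Carrying out this construction --- selecting premise-values on which $f$ disagrees with an independently chosen conclusion value, and verifying feasibility of the required marginals inside the strictly positive, closed set $\CH(\Pi)$ --- is the combinatorial heart of the proof and the place where the logical structure of $f$ and the precise alignment clause of $\kappa_4$ must be used in full.
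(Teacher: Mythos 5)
Your skeleton matches the paper's: represent the paradox event as a union of polyhedra in histogram space, invoke \citep[Theorem~2]{Xia2021:How-Likely}, identify $\kappa_1$ with non-reachability, and identify $\kappa_2$/$\kappa_3$ with the (universal/existential) absence of inconsistent effective refinements --- that much is the paper's Steps 1, 2 and 3.2. The genuine gap is in the step you yourself call decisive. The quantity that \citep[Theorem~2]{Xia2021:How-Likely} attaches to a polyhedron $\calH$ is the dimension of its \emph{characteristic cone} $\calH_{\leq 0}$, and it is charged whenever the center $\pi$ lies \emph{anywhere} in that cone --- boundary included --- and $\calH$ contains a reachable histogram. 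It is not, as you write, ``the dimension of the highest face of $\calH$ whose relative interior meets $\nagent\cdot\CH(\Pi)$.'' Because of this misreading, your proof of the direction $\neg\kappa_4\Rightarrow\Theta(1)$ rests on constructing a \emph{tie-free} inconsistent center inside $\CH(\Pi)$, and such a center need not exist. Concretely, take $\Pi=\{\pi\}$ with $\pi$ uniform over $\setvote$, $r$ the majority rule, and $f$ the conjunction of two premises (exactly the Impartial Culture setting of Corollary~\ref{coro:main}): the unique center is tied on both premises and on the conclusion, every inconsistent refinement arises only through tie-breaking, and no point of $\CH(\Pi)$ has the strict marginals your construction requires; yet the true answer is $\Theta(1)$ (the paradox probability tends to roughly $1/4$). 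Under your face-based criterion one would even conclude $\Theta(\nagent^{-3/2})$ here, since the center sits on a codimension-$3$ face of the paradox polyhedron.

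What is actually needed --- and what the paper proves in its Step 3.1 --- is a statement about the cones themselves, independent of where $\pi$ sits inside them. If $\kappa_4$ fails, then $\vc_{\npm+1}$ is linearly independent of $\{\vc_1,\ldots,\vc_{\npm}\}$; the paper shows this by assuming $\vc_{\npm+1}=\sum_{\ipm'}a_{\ipm'}\vc_{\ipm'}$ and deriving, from the fact that each $\vc_{\ipm}$ takes only the two values $\threshold_{\ipm}-1$ and $\threshold_{\ipm}$, that $\vc_{\npm+1}$ would have to take at least three distinct values --- a contradiction. Hence every paradox cone has dimension $\nvote$, and \emph{any} center admitting an inconsistent effective refinement, tied or not, already yields $\Theta(1)$ via the active-dimension formula. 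Conversely, if $\kappa_4$ holds, the conclusion constraint is the negation of the aligned premise constraint, so every inconsistent cone collapses into the hyperplane $\vc_{\ipm}^{\T}\vx=0$ and has dimension $\nvote-1$, giving $\Theta(\nagent^{-1/2})$. Your analysis of the $\kappa_4$-true case (paradoxes force an exact tie) and of the min side is sound, but without the linear-independence argument the $\Theta(1)$ case --- the one covering the Impartial Culture corollary that motivates the paper --- remains unproven.
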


Notice that both max and min smoothed likelihood of doctrinal paradox have four cases: the $0$ case, the exponential case, the polynomial case, and the constant case. The first three cases are good news, particularly in the max part, which states that the doctrinal paradox vanishes when the number of agents is large. The last case is bad news, in particular for the min part, which states that the doctrinal paradox does not vanish.

\noindent We believe that Theorem~\ref{theo:main} is quite general as exemplified in the next subsection, because it works for any $p\ge 1$, any logical connection function, any quote rule, and only makes mild assumptions on $\Pi$. We believe that asymptotic studies as done in Theorem~\ref{theo:main} is important, because first, $n$ is large in some classical applications such as combinatorial voting~\citep{Lang16:Voting}. Second, even though $n$ is typically small in some classical applications such as jury systems, we believe that modern technology will revolutionize such applications in the near future and promote them to the larger scale and broader applications. To be prepared for the future, Theorem~\ref{theo:main} allows us to analyze the practical relevance of the paradox, and provides a mathematical basis for choosing the best (quote) rule to minimize the likelihood of the paradox.

\subsection{Applications of Theorem~\ref{theo:main}}
In this subsection, we present a few examples of applications of  Theorem~\ref{theo:main}. The first example illustrates the $\Theta(1)$ case.
\begin{example}[\bf \boldmath $\Theta(1)$ case]\label{eg:theta1}
Recall that in the setting of  Example~\ref{eg:2}, we have:
\begin{equation}\nonumber
\begin{split}
(\kappa_1,\kappa_2,\kappa_3,\kappa_4) =
\left\{\begin{array}{ll}
(\true,\false,\false,\false)        & \text{if }\nagent=1 \\
(\false,\false,\false,\false)        & \text{if }\nagent\geq2 \\
\end{array}\right.
\end{split}
\end{equation}
\begin{minipage}[t][][b]{0.3\textwidth}
According to Theorem~\ref{theo:main}, the max and min smoothed likelihood of doctrinal paradox are $\Theta(1)$, as shown in Table~\ref{tab:prob2}. A numerical verification can be found in Figure~\ref{fig:1}(a) in Section~\ref{sec:exp}.
\end{minipage}
\hfill
\begin{minipage}[t][][b]{0.67\textwidth}
    %\resizebox{0.632\linewidth}{!}
    {
    \begin{tabular}{cccc}
    \toprule
     Smoothed likelihood & $\nagent=1$  & $\nagent\geq2$ (Ex.~\ref{eg:theta1}) & $\nagent\geq2$ (Ex.~\ref{eg:new}) \\\hline
     $\pmax$   & $0$ & $\Theta(1)$ & $\exp(-\Theta(n))$\\
     $\pmin$   & $0$ & $\Theta(1)$ & $\exp(-\Theta(n))$\\
    \bottomrule
    \end{tabular}}%\vspace{-2.2mm}
    \vspace{-0.5em}
     \captionof{table}{Smoothed likelihood of doctrinal paradox  in Example~\ref{eg:theta1} and \ref{eg:new}.\label{tab:prob2}}
     \end{minipage}
%    \vspace{-1.5em} \end{minipage}
\end{example}
The second example illustrates the exponential case, which is positive news.

%First, the condition for the $0$ case is straightforward. %The intuitions for $\kappa_1,\kappa_2$ and $\kappa_4$ are stated in the explanations of their definitions.
%The condition for the $\Theta(1)$ case requires all distributions in the convex hull of $\Pi$ to be close to the region of doctrinal paradoxes. 

%\lirong{comment on its generality, and look at the example.}
%The proof of Theorem~\ref{theo:main} will be presented in Section~\ref{sec:tech_proof}. %The rest part of this section focus on providing examples for conditions $\kappa_1,\cdots,\kappa_4$ and the different cases of Theorem~\ref{theo:main}.

\begin{example}[\bf Exponentially small case]\label{eg:new}
Let us consider the same quota rule and the same logical connection as in Example~\ref{eg:DP_our} and \ref{eg:2}. Let $\Pi = \{\pi_1 = (0.12, 0.12, 0.12, 0.64), \pi_2 = (0.1,0.1,0.1,0.7)\}$. Following a similar reasoning as in  Example~\ref{eg:2}, we have:
\begin{equation}\nonumber
\begin{split}
(\kappa_1,\kappa_2,\kappa_3,\kappa_4) =
\left\{\begin{array}{ll}
(\true,\false,\false,\false)        & \text{if }\nagent=1 \\
(\false,\true,\true,\false)        & \text{if }\nagent\geq2 \\
\end{array}\right.
\end{split}
\end{equation}
According to Theorem~\ref{theo:main}, the max and min smoothed likelihood of doctrinal paradox are $\exp(-\Theta(n))$, as shown in Table~\ref{tab:prob2}. A numerical verification can be found in Figure~\ref{fig:1}(b).
\end{example}

%If applying Theorem~\ref{theo:main} to Example~\ref{eg:2}, we will have the results in Table~\ref{tab:prob2}. Also see Figure~\ref{fig:1a} in Section~\ref{sec:exp} for  numerical verification.\lirong{Why not mention this before Example 5 (or make it part of Example 5)? Also, why there are two columns in Table 3 both with $n\ge 2$? What do e.g. 4 and e.g. 5 mean?}

The following example is also positive news because the doctrinal paradox vanishes as $n\rightarrow\infty$, though not as fast as in Example~\ref{eg:new} w.r.t.~the max-adversary. Notice that the max and min smoothed likelihood of doctrinal paradox are asymptotically different.

\begin{example}[\bf  \boldmath $\Theta(\nagent^{-1/2})$, $\exp(-\Theta(n))$ and $0$ cases]\label{eg:1}
Let the logical connection be $\CL \leftrightarrow \PM_{1}$, $\Pi = \{\pi_1 = (0.9,0.1),\pi_2 = (0.3,0.7)\}$\footnote{Because the doctrinal paradox only depends on the votes to premise $\PM_{\iagent}$ (or conclusion $\CL$), we use the marginal distribution on $\PM_{\ipm}$ to simplify notations. Here, $\pi = (\prob,1-\prob)$ mean $\PM_{\ipm}=0$ with probability $\prob$ while $\PM_{\ipm}=1$ with probability $1-\prob$.} and quota rule $r_{\vthreshold,\vbreaking}$ where $q_1=q_3=0.5$, $d_1 = 1$, and $d_3=0$. % $(\threshold_\ipm, \threshold_{\npm+1}) = (0.5,0.5)$ and $(\breaking_\ipm, \breaking_{\npm+1}) = (1,0)$. 
We have: 
%\begin{equation}\nonumber
%\begin{split}
$(\kappa_1,\kappa_2,\kappa_3,\kappa_4) =
\left\{\begin{array}{ll}
(\true,\false,\false,\true)        & \text{if }\nagent\text{ is odd } \\
(\false,\false,\true,\true)        & \text{if }\nagent\text{ is even } \\
\end{array}\right..
$
%\end{split}
%\end{equation}
%Because $\PM_{\iagent}$ has the same acceptance threshold as the conclusion, we know that profile $\vP$ (or distribution $\pi$) is doctrinal paradox if and only if $\PM_{\ipm}$ is tied.  Then, we discuss conditions $\kappa_1,\dots,\kappa_4$ respectively:\\
%$\kappa_1$: When $\nagent$ is \textbf{odd}, $\kappa_1$ is \textbf{true} because $\PM_{\ipm}$ cannot be tied. When $\nagent$ is \textbf{even}, $\kappa_1$ is \textbf{false} because $\vP$ is doctrinal paradox if exactly half people vote ``$1$'' to $\PM_{\ipm}$.\\
%$\kappa_2$: \textbf{False}. Let $\pi = (0.5,0.5) = \frac{1}{3}\pi_1 + \frac{2}{3}\pi_2 \in \CH(\Pi)$. $\pi$ is tied on both $\PM_{\iagent}$ and $\CL$. The four refinements of $\pi$ are: $(\PM_\iagent,\CL) = (0,0), (1,0), (0,1)$ and $(1,1)$. Refinement $(\PM_\iagent,\CL) =(0,1)$ is not effective because no profile can offer this aggregation result. The effective refinement of $(\PM_\iagent,\CL) =(1,0)$ is a doctrinal paradox. \\
%$\kappa_3$: \textbf{True}. $\pi_1$ has no tie and thus $\calT_{\pi_1} = \{\pi_1\}$. $\pi_1\in\CH(\Pi)$ is not a doctrinal paradox.\\
%$\kappa_4$: \textbf{True} according to the definition of $f$ and $r_{\vthreshold,\vbreaking}$.\\

\begin{minipage}[t][][b]{0.33\textwidth}
According to Theorem~\ref{theo:main}, the max and min smoothed likelihood are presented in  Table~\ref{tab:prob1}. Also see Figure~\ref{fig:1}(c) in Section~\ref{sec:exp} for a numerical verification.
\end{minipage}
\begin{minipage}[t][][b]{0.67\textwidth}
\centering
    \begin{tabular}{ccc}
    \toprule
     Smoothed likelihood & $\nagent$ is odd & $\nagent$ is even \\\hline
     $\pmax$   & $0$ & $\Theta(n^{-1/2})$ \\
     $\DP_{\Pi,r_{\vthreshold,\vbreaking},f}^{\min}(\nagent)$   & $0$ & $\exp(-\Theta(\nagent))$\\
    \bottomrule
    \end{tabular}\vspace{-0.5em}
    \captionof{table}{Smoothed likelihood of doctrinal paradox  in Example~\ref{eg:1}.\label{tab:prob1}}
\end{minipage}
\end{example} 

The following corollary of Theorem~\ref{theo:main} with $\Pi = \{\pi\}$ address the open questions in~\citep{List2005:The-probability} about the probabilities of doctrinal paradox under all i.i.d. distributions, all $q$, all quota rules, and all logical connection funcations.
\begin{corollary}[\bf  \boldmath Likelihood of  doctrinal paradox under i.i.d. distributions as in~\citep{List2005:The-probability}]\label{coro:main}
Given any strictly positive distribution $\pi$ over $\setvote$, any logical connection function $f:\setvote\to\{0,1\}$, and any quota rule $r$. For any $\nagent\in\mathbb N$,%\lirong{I don't understand the formula---what is $\vec v$ and what's its relationship with $P$?}
{%\small
\begin{equation}\nonumber
\begin{split}
&{\Pr}_{\vP\sim(\pi)^n}\left(\vP\text{ is a doctrinal paradox}\right)=\left\{\begin{array}{ll}
0      &       \text{if $\kappa_1$ is true}\\
\exp(-\Theta(\nagent))        &  \text{otherwise, if all }  \hvPM\in\calT_{\pi}\text{ are consistent,} \\
\Theta(\nagent^{-1/2})      &   \text{otherwise, if $\kappa_4$ is true}\\
\Theta(1)      &   \text{otherwise}\\
\end{array}\right.,
\end{split}
\end{equation}}
%where $\kappa_5$ is: $\forall\, \hvPM\in\calT_{\pi}$, $\hvPM$ is consistent.
\end{corollary}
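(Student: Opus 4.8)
The plan is to obtain Corollary~\ref{coro:main} as the specialization of Theorem~\ref{theo:main} to the singleton set $\Pi=\{\pi\}$, so essentially no new work is required beyond unwinding the definitions. First I would verify that $\{\pi\}$ satisfies the hypotheses of Theorem~\ref{theo:main}: any singleton is closed in $\bbR^{\nvote}$, and $\{\pi\}$ is strictly positive exactly because $\pi$ is assumed strictly positive. Thus Theorem~\ref{theo:main} applies with this choice of $\Pi$.

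Next I would note that for a singleton the smoothed likelihoods collapse to the i.i.d.\ probability. Since $\Pi^{\nagent}=\{(\pi,\cdots,\pi)\}$ is itself a single point, the supremum defining $\pmaxg$ and the infimum defining $\pming$ are both taken over one profile of distributions, giving
\[
\pmaxg=\pming={\Pr}_{\vP\sim(\pi)^{\nagent}}\left(\vP\text{ is a doctrinal paradox}\right).
\]

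It then remains to simplify the conditions $\kappa_1$--$\kappa_4$ under $\Pi=\{\pi\}$. Conditions $\kappa_1$ and $\kappa_4$ do not involve $\Pi$ and so carry over unchanged. The crucial observation is that $\CH(\{\pi\})=\{\pi\}$, which forces the universally quantified condition $\kappa_2$ and the existentially quantified condition $\kappa_3$ to coincide: each reduces to the single statement that every $\hvPM\in\calT_{\pi}$ is consistent. Substituting this common condition for both $\kappa_2$ and $\kappa_3$ into the two case analyses of Theorem~\ref{theo:main}, and using the displayed equality above, recovers precisely the four cases asserted in the corollary.

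Because every step is a routine unwinding of definitions, I do not expect a genuine obstacle. The only point deserving care is the coincidence $\kappa_2\equiv\kappa_3$ under $\CH(\{\pi\})=\{\pi\}$, which is what allows the separate exponential cases of the max- and min-parts of Theorem~\ref{theo:main} to merge into the single exponential case of Corollary~\ref{coro:main}.
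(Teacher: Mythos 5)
Your proposal is correct and coincides with the paper's own treatment: the paper presents Corollary~\ref{coro:main} as an immediate specialization of Theorem~\ref{theo:main} to $\Pi=\{\pi\}$, and your unwinding (singleton is closed and strictly positive, $\sup=\inf$ collapses to the i.i.d.\ probability, and $\CH(\{\pi\})=\{\pi\}$ makes $\kappa_2$ and $\kappa_3$ coincide with the condition that all $\hvPM\in\calT_{\pi}$ are consistent) is exactly the intended argument.
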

In particular, when $\pi$ is the uniform distribution over $\setvote$ and the aggregation rule is the majority, the likelihood of doctrinal paradox  is either $\Theta(1)$ or $0$ depending on the logical connection function $f$. %~\lirong{complete. If hard then remove this sentence. }

\subsection{Proof Sketch of Theorem~\ref{theo:main}}\label{sec:tech_proof}
The proof proceeds in three steps. In {\bf Step 1}, we model  doctrinal paradox  as  unions of  {polyhedra}. In {\bf Step 2}, we apply~\citep[Theorem 2]{Xia2021:How-Likely} to obtain a characterization. In {\bf Step 3}, we close the gap between  Step 2 and Theorem~\ref{theo:main} by translating conditions and degree of polynomial in Step 2 to their counterparts in Theorem~\ref{theo:main}. 

%Theorem~\ref{theo:main} under its \emph{polyhedra} presentation.

{\noindent \bf Step 1: Polyhedra Presentation. } We show that the region of doctrinal paradox can be presented by a set of polyhedra, each of which is in the $\nvote$-dimensional Euclidean space with the form $\calH = \left\{\vx:\vA\cdot\vx\leq\vb\right\}$. For any $\ipm\in[\npm+1]$, we first define vector $\vc_{\ipm}\in\{\threshold_\ipm-1,\threshold_\ipm\}^\nvote$ as follows:%the characterization vector for proposition $\PM_{\ipm}$:
\begin{equation}\label{equ:vc}
\vc_\ipm(\vPM) \triangleq \left\{\begin{array}{ll}
\threshold_\ipm-1    &       \text{if } \vPM\in\setidx_{\ipm}\\
\threshold_\ipm      &       \text{otherwise}
\end{array}\right..
\end{equation}
%When calculating the inner product $\vc_{\ipm}\cdot\HG(\vP)$, $\vc_\ipm$ gives the weight of $\threshold_\ipm$ to all votes (sum to $\nagent\cdot\threshold_\ipm$) and gives an extra weight of $-1$ on all votes that $\PM_\ipm=1$ (sum up to $-\nagent_\ipm$). 
For any profile $P$, it is not hard to verify that the sign of $\vc_{\ipm}\cdot\HG(\vP)$ is closely related to the aggregation result by the quota rule. We define sign function $\sign(\cdot)$ and $\bbone(\cdot)$ as follows,
\begin{equation}\nonumber
\begin{split}
\sign(x)&\triangleq\left\{\begin{array}{lcl}
1      &      & \text{if } x > 0\\
-1      &      & \text{otherwise}
\end{array}\right.
\;\;\;\text{and}\;\;\;\;\;\bbone(\kappa)\triangleq\left\{\begin{array}{lcl}
1      &      & \text{if } \kappa \text{ is true}\\
0\;\;\;      &      & \text{otherwise}
\end{array}\right..\\
\end{split}
\end{equation} % We let $\sign_{0\to -1}(\hPM_j)\cdot\vpp_j$ to be the $j$-th column vectors of $\vA_{(\hPM_1,\cdots\hPM_q)}$ and let $\bbone(\hPM_j=1)-b_j$ to be the $j$-th component of $\vb_{(\hPM_1,\cdots\hPM_q)}$. 
Then, we define
%\vspace{-0.5em}
$
\vA_{\hvPM}  \triangleq \begin{pmatrix}
\sign(\hPM_1)\cdot\vc_1^{\T}\\
\vdots\\
\sign(\hPM_{\npm+1})\cdot\vc_{\npm+1}^{\T}
\end{pmatrix}$ and $\vb_{\hvPM} \triangleq \begin{pmatrix}
\sign(\hPM_1)\cdot \breaking_1-\bbone(\hPM_1=1)\\
\vdots\\
\sign(\hPM_{\npm+1})\cdot \breaking_{\npm+1}-\bbone(\hPM_{\npm+1}=1)
\end{pmatrix}.
$
Let $\calH_{\hvPM} = \{\vx:\;\vA_{\hvPM}\,\vx\leq\vb_{\hvPM}\}$ denote the polyhedra and let $\calH_{\hvPM,\;\leq0} = \{\vx:\;\vA_{\hvPM}\,\vx\leq\vzero\}$ denote its \emph{characteristic cone}.  Let $\calC = \{\calH_{\hvPM}:\;\hvPM\in\{0,1\}^{\npm+1}\text{ is inconsistent}\}$
denote the set of all polyhedra corresponding to doctrinal paradox.  The following example visualizes a polyhedron and its characteristic cone for doctrinal paradox. % for the connection between polyhedra and the doctrinal paradox when $\npm=1$

\begin{example}[\bf Polyhedra representation of doctrinal paradox]\label{eg:polyhedra}
We consider a system with one premise $\PM_1$. The logical connection between conclusion $\CL$ and premise $\PM_1$ is $\CL\leftrightarrow\PM_1$. The parameters for quota rule are $(\threshold_1,\threshold_2)=(0.25,0.65)$ and $(\breaking_1,\breaking_2)=(1,1)$. %It's easy to see that the profile is doctrinal paradox when $t\cdot\nagent\leq \nagent_1\leq t_C\cdot\nagent-1$.

\begin{minipage}[t][][b]{0.5\textwidth}
According to Equation~(\ref{equ:vc}), we have $\vc_1 = (\threshold_1,\threshold_1-1)^{\T}$ and $\vc_2 = (\threshold_2,\threshold_2-1)^{\T}$. 
Then, the aggregation results of $\hvPM=(1,0)$ and $\hvPM=(0,1)$ are both inconsistent. Thus, $\calC = \left\{\calH_{(1,0)},\calH_{(0,1)}\right\}$.  $\calH_{(1,0)}$ is represented by $\vA_{(1,0)}$ and $\vb_{(1,0)}$ defined as follows.

$
\vA_{(1,0)}  = \begin{pmatrix}
\threshold_1 & \threshold_1-1\\
-\threshold_2 & 1-\threshold_2
\end{pmatrix} \text{ and }\vb_{(1,0)} = \begin{pmatrix}
0\\
-1
\end{pmatrix}$

It is not hard to verify that for any profile $\vP$ of $\nagent$ agents, $\vA_{(1,0)}\cdot\HG(\vP)\leq\vb_{(1,0)}$ if and only if
\begin{equation*}
\nagent_1\geq\threshold_1\cdot\nagent\;\;\;\text{and}\;\;\;
\nagent_1\leq\threshold_2\cdot\nagent-1
\end{equation*}
where $n_1$ represents the number of votes for $\PM_1=1$ in $\vP$. Figure~\ref{fig:eg_polyhedra} illustrates the regions corresponding to   $\calH_{(1,0)}$ and its characteristic cone. %\lirong{update the notation in Figure 1} %Also note that polyhedron $\calH_{(0,1)}$ does not have any positive points.  
\end{minipage}
\hfill
\begin{minipage}[t][][b]{0.46\textwidth}
%\begin{figure} 
%\centering
\includegraphics[width =  \textwidth]{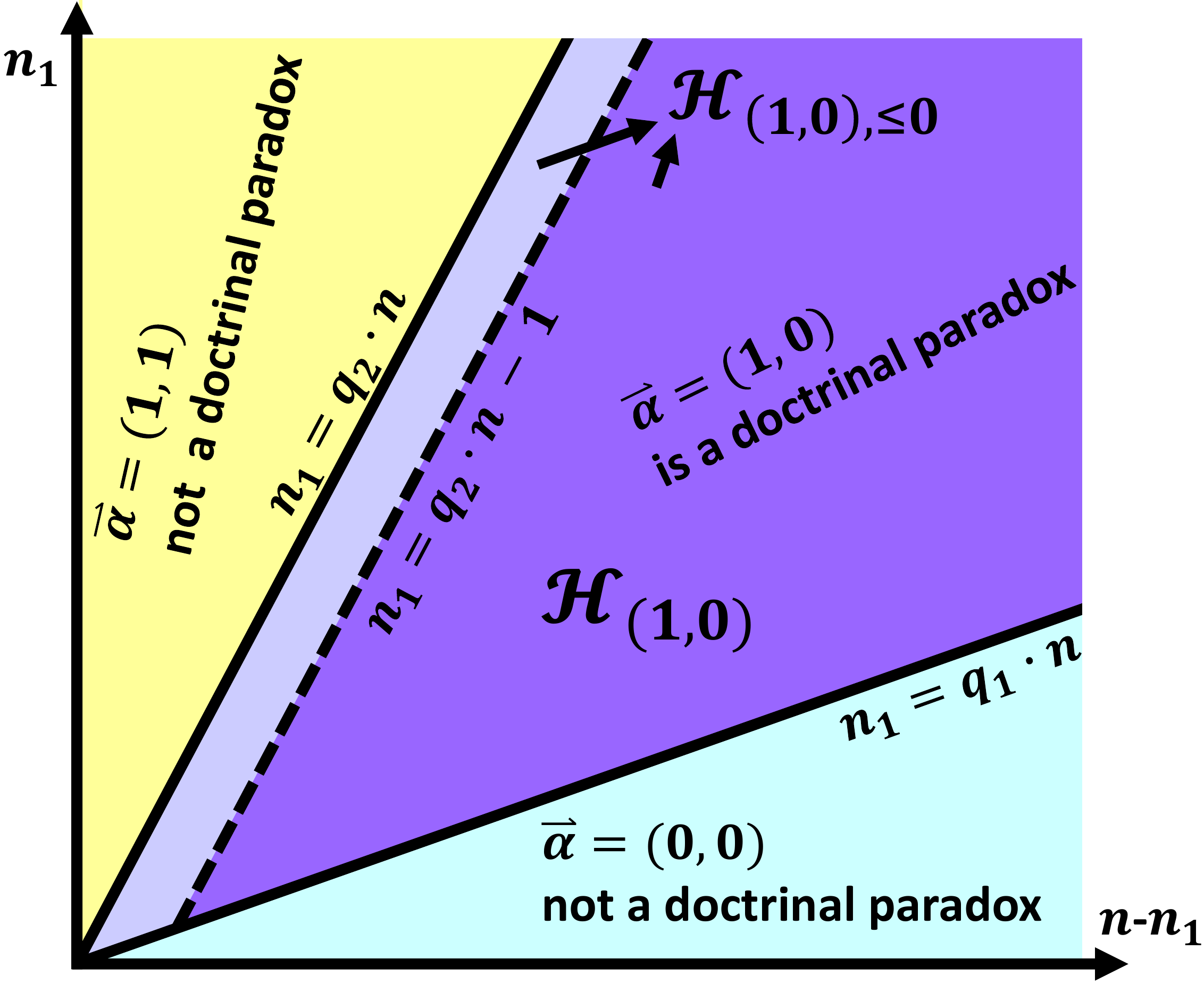}
\captionof{figure}{\small  $\calH_{(1,0)}$ in Example~\ref{eg:polyhedra}. The characteristic cone $\calH_{(1,0),\,\leq0}$ is the combination of $\calH_{(1,0)}$ and the region between $\nagent_1 = \threshold_2\cdot\nagent-1$ and $\nagent_1 = \threshold_2\cdot\nagent$.}
\label{fig:eg_polyhedra}
%\end{figure}
%\lirong{Arrows in the figure are too small to be seen. Also, what are $t_C$ in the caption?}
\end{minipage}

%\begin{figure}[htp]
%    \centering%\vspace{-2mm}
%    \includegraphics[width = 0.5\textwidth]{polyhedra.pdf}
%    \vspace{-0.3em}
%    \caption{Illustration of the polyhedron $\calH_{(1,0)}$ in Example~\ref{eg:polyhedra}. The characteristic cone $\calH_{(1,0),\,\leq0}$ is the combination of $\calH_{(1,0)}$ and the region between $\nagent_1 = t_C\cdot\nagent$ and $\nagent_1 = t_C\cdot\nagent-1$.} 
%    \label{fig:eg_polyhedra}\vspace{-4mm}
%\end{figure}
\end{example}
%\vspace{-4mm}

Consequently, the smoothed likelihood of doctrinal paradox is equivalent to the smoothed probability for the histogram of the randomly generated profile (which is a {\em Poisson multivariate variable (PMV)}) to be in $\calC$. Therefore, in Step 2 we apply~\citep[Theorem 2]{Xia2021:How-Likely} to obtain a characterization for the PMV-in-$\calC$ problem. However, the results are too technical and generic to be informative in judgement aggregation context (see Lemma~\ref{theo:tech} in Appendix~\ref{sec:main_proof}). Step 3 aims at obtaining the characterization as stated in Theorem~\ref{theo:main} based on Lemma~\ref{theo:tech}, which involves non-trivial calculations and simplifications. The full proof can be found in Appendix~\ref{sec:main_proof}. $\hfill\Box$

\subsection{Computational Verifications of Conditions in Theorem~\ref{theo:main}}
In practice, $\kappa_1$-$\kappa_4$ may not be easy to verify by hand. In this subsection, we briefly discuss algorithms for checking $\kappa_1$-$\kappa_4$  under any quota rule $r_{\vthreshold,\vbreaking}$ with rational threshold $\vthreshold$ w.r.t.~any $\Pi$ that consists of finitely many distributions whose probabilities are rational numbers. The runtime of these algorithms is polynomial in $m$ (as well as $\log n$ for $\kappa_1$). While $m=2^p$ may seem exponentially large, notice that the input size (e.g., $\Pi$ and the truth-table representation of $f$) can also be large, i.e., $\Theta(m)$.

Suppose $ \Pi = \{\pi^1,\ldots,\pi^\ell\}\subseteq {\mathbb Q}^{m}$. $\kappa_1$-$\kappa_4$ can be verified as follow, by combining the results of multiple (I)LP whose variables are $\vec y = (y^1,\ldots,y^\ell)$ that represent  the multiplicity of $\pi^\ell$. 

{\bf \boldmath $\kappa_1$} holds for any given $n$ if and only if there exists an inconsistent $\vec\alpha\in \{0,1\}^{\npm+1}$ such that the following ILP is feasible.
\begin{equation}
\label{eq:ilp}
\begin{pmatrix*}[r]
\vA_{\hvPM} \cdot ((\pi^1)^{\T},&\ldots&, (\pi^\ell)^{\T})\\
&\vec 1&\\
&-\vec 1&\\
\end{pmatrix*}\cdot (\vec y) \le \begin{pmatrix*}[r]
\vb_{\hvPM}\\
n\\
-n\\
\end{pmatrix*}, \text{ where }\vec y \text{ are integers}.
\end{equation}
Following Lenstra's theorem~\cite{Lenstra83:Integer}, when $\ell$ is viewed as a constant (and $q$ and $n$ are variables), the runtime of each ILP is polynomial in $q$ and $\log n$. Therefore, the overall complexity of checking $\kappa_1$ is polynomial in $\nvote$ ($=2^{\npm}$) and $\log n$. Let $A_n\subseteq \{0,1\}^{p+1}$ denote the set of all inconsistent $\alpha$ such that (\ref{eq:ilp}) has a feasible (non-negative integer) solution, which will be used later.

{\bf \boldmath $\kappa_2$ and $\kappa_3$} can be verified by similar algorithms. Notice that $\calT_{\pi}$ only depends on whether each component of $r(\pi)$ is positive, $0$, or negative. Therefore, we can naturally define $\calT_{\vec\beta}$ for every $\vec\beta\in \{+,0,-\}^{\npm+1}$. Let $B_n$ denote the set of all  $\vec \beta$, for each of which there exists $\alpha\in A_n$ that refines it. 
%for every $\vec \beta$ and every inconsistent $\vec \alpha\in \{0,1\}^p$, we verify whether there exists $\vec \alpha\in A_n$ that refines it. Let $B_n$ denote the set of all  $\vec \beta$, for each of which that there exists $\vec \alpha\in A_n$. 
Then, for every $\vec\beta\in B_n$, we define LP$_{\vec\beta}$, whose variables are $\vec y\ge \vec 0$ that do  not need to be integers, to verify whether there exists $\pi\in \CH(\Pi)$ such that $r_{\vthreshold,\vbreaking}(\pi) = \vec \beta$. More precisely,  for every $j\le p+1$, \\
$1^{\circ}$ if $\beta_j = +$, then LP$_{\vec\beta}$ contains a linear constraint $\vc_j^{\T}\cdot (\sum_{j=1}^\ell y_j(\vec \pi^j)^{\T}) \le -1$;\\
$2^{\circ}$ if $\beta_j = -$, then  LP$_{\vec\beta}$ contains a linear constraint $-\vc_j^{\T}\cdot (\sum_{j=1}^\ell y_j(\vec \pi^j)^{\T}) \le -1$;\\
$3^{\circ}$ if $\beta_j = 0$, then  LP$_{\vec\beta}$ contains two linear constraints $ \vc_j^{\T}\cdot (\sum_{j=1}^\ell y_j(\vec \pi^j)^{\T}) \le 0$ and $-\vc_j^{\T}\cdot (\sum_{j=1}^\ell y_j(\vec \pi^j)^{\T}) \le 0$.\\
Therefore, $\kappa_2$ is true  if for every $\vec\beta\in B_n$, LP$_{\vec\beta}$ is not feasible. $\kappa_3$ is true, if there exists $\vec\beta\notin B_n$ such that LP$_{\vec\beta}$ is feasible. Notice that $\ell$ is viewed as a constant, which means that LP$_{\vec\beta}$ can be solved in time that is polynomial in $m$. 

{\bf \boldmath $\kappa_4$} can be verified by enumerating all inputs of $f$, which takes time that is polynomial in $m$.

\section{Experiments}\label{sec:exp}
We conduct numerical experiments to verify the results in Theorem~\ref{theo:main}. The first three experiments (Figure~\ref{fig:1}) follows the same setting as Example~\ref{eg:theta1},~\ref{eg:new},~\ref{eg:1} respectively.
In Figure~\ref{fig:1}, $\text{DP}_{\Pi}^{\max}$ and $\text{DP}_{\Pi}^{\min}$ (the blue circles and red stars) represent the estimated max-smoothed likelihood and the min-smoothed likelihood of doctrinal paradoxes. The dot curves illustrate the fittings of the estimated smoothed probabilities. The expressions and fitness of all fitting curves are presented in Appendix~\ref{sec:add_exp}. Recalling the notations used in our definition of $\pmaxg$, we say $\vpi\in\Pi^{\nagent}$ is one kind of \emph{distribution assignment}. 
We run one million ($10^6$) independent trials to estimate the probability of doctrinal paradoxes under each distribution assignment. Then, $\text{DP}_{\Pi}^{\max}$ (or $\text{DP}_{\Pi}^{\min}$) takes the maximum (or the minimum) probability of doctrinal paradoxes among all distribution assignments. 

\vspace{-0.5em}
\begin{figure*}[ht]
\includegraphics[width = \textwidth]{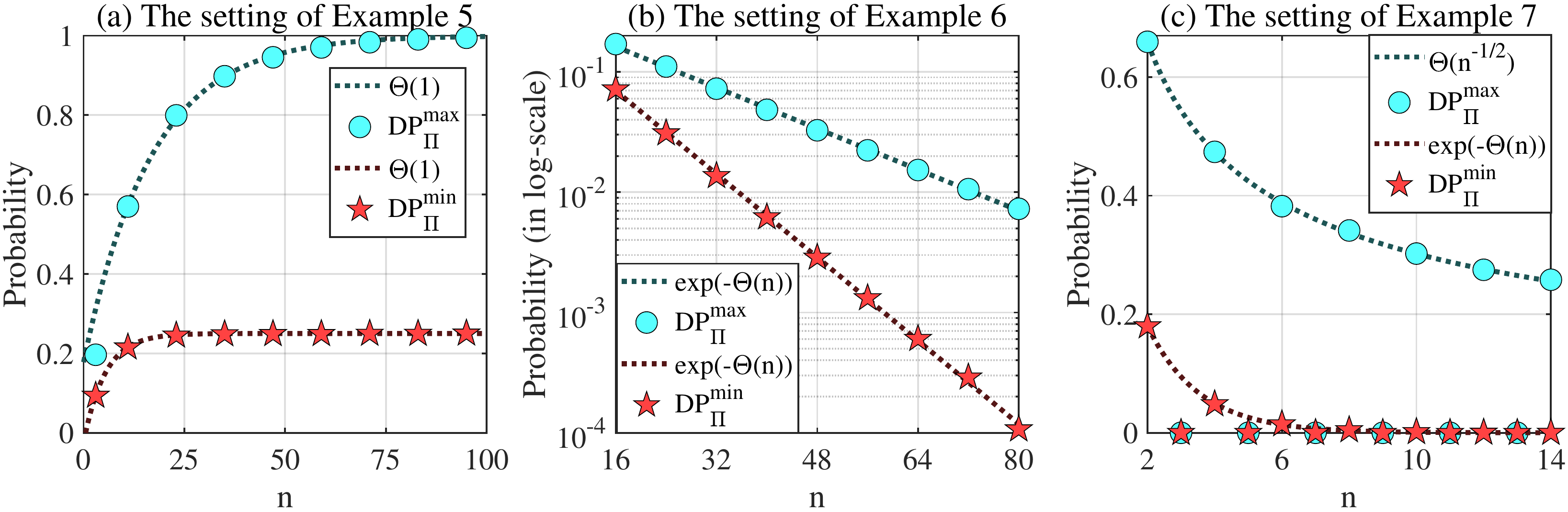}
\vspace{-1.7em}
\caption{\small  A numerical verification of Theorem~\ref{theo:main}. Note that the plot for Example~\ref{eg:new} is in log-scale.}
\label{fig:1}
\vspace{-0.5em}
\end{figure*}

\iffalse
\begin{figure*}[ht]
	\centering
	\begin{subfigure}[t]{1.77in}
		\centering
		\includegraphics[width=1.77in]{f32.pdf}
		\vspace{-1.5em}
		\caption{Setting of Example~\ref{eg:theta1}.}\label{fig:1a}
	\end{subfigure}
	\hspace{0.05in}
	\begin{subfigure}[t]{1.74in}
		\centering
		\includegraphics[width=1.74in]{f3_new2.pdf}
		\vspace{-1.5em}
		\caption{Setting of Example~\ref{eg:new}.
		}\label{fig:1b}		
	\end{subfigure}
	\hspace{0.05in}
	\begin{subfigure}[t]{1.75in}
		\centering
		\includegraphics[width=1.75in]{f22.pdf}
		\vspace{-1.5em}
		\caption{Setting of Example~\ref{eg:1}.
		}\label{fig:1c}
	\end{subfigure}
	\vspace{-0.4em}
    \caption{Numerical verification of Theorem~\ref{theo:main}.}
	\label{fig:1}
\end{figure*}
\fi

It is easy to see that the results are consistent with  Theorem~\ref{theo:main}. For example, Figure~\ref{fig:1}(a) shows that both the max-smoothed likelihood and the min-smoothed likelihood of doctrinal paradoxes are $\Theta(1)$, which matches the result in Table~\ref{tab:prob2}. Two additional sets of experiments, whose results are presented in Figure~\ref{fig:2}(b) in Appendix~\ref{sec:add_exp}, study a more complex setting of three premises.

\iffalse
We study the following two settings of quota rules:
\begin{itemize}
    \item \textbf{Majority setting: } $(\threshold_1,\threshold_2,\threshold_3,\threshold_2) = (0.5,0.5,0.5,0.5)$ and $(\breaking_1,\breaking_2,\breaking_3,\breaking_2) = (0,1,0,1)$.
    \item \textbf{Quota setting: } $(\threshold_1,\threshold_2,\threshold_3,\threshold_2) = (0.2,0.2,0.2,0.2)$ and $(\breaking_1,\breaking_2,\breaking_3,\breaking_2) = (0,1,0,1)$.
\end{itemize}
Our results (Figure~\ref{fig:2}(b) in Appendix~\ref{sec:add_exp}) for majority setting  shows that both $\pmax$ and $\pmin$ are $\exp(-\Theta(\nagent))$ under majority setting. In comparison, $\pmax=\Theta(1)$ and $\pmin = \exp(-\Theta(\nagent))$ under the quota setting. The expressions of all asymptotic curves in our figures can be found in Appendix~\ref{sec:add_exp}.
\fi

\section{Conclusions and Future Work}\label{sec:con}
We made a first step towards understanding the likelihood of doctrinal paradox in the smoothed analysis framework. There are many immediate open questions. For example, can we characterize the smoothed likelihood of the paradox for general, more complicated propositions (sometimes called {\em agendas})? Can we remove the strict positiveness assumption on $\Pi$? Can we prove smoothed impossibility theorems that extend the quantitative impossibility theorems under i.i.d.~uniform distributions~\citep{Nehama2013:Approximately,Filmus2020:AND-testing}? More generally, we believe that building a comprehensive picture of the smoothed properties of aggregation rules in judgement aggregation is a promising (yet challenging) direction in theory and in practice.

%This paper provides a highly generic theorem for the likelihood of doctrinal paradoxes under the natural framework of smoothed social choice. A direct corollary of our theorem solves an open question in the field of judgement aggregations: the likelihood of doctrinal paradoxes under i.i.d. assumptions. One interesting future direction is to study the likelihood of doctrinal paradoxes for non-strictly positive distributions. The smoothed probability of doctrinal paradoxes under more general settings remains an open question for the field of judgement aggregations.  
%\clearpage
{%\small
\bibliographystyle{plainnat} 
%\input{bib.bbl}
%\bibliography{references}

}

\clearpage
\appendix
\onecolumn
{\large \centerline{\textbf{Supplementary Material for The Smoothed Likelihood of Doctrinal Paradox}}
\centerline{\textbf{}}}

\section{Detailed Settings for Section~\ref{sec:prelim} (Preliminary)}
\subsection{Vectorized indexing for votes or histogram}\label{sec:extra_notations}
We fix the judgement of $\vPM = \vell = (\ell_1,\cdots,\ell_{\npm})$ to correspond to the $(\overline{\ell_1\cdots\ell_{\npm}}_{(2)}+1)$-th component of $\vv$, where $\overline{\ell_1\cdots\ell_{\npm}}_{(2)}$ is the binary number taking $\ell_\ipm$ as the $\ipm$-th digit. For example, if $\PM_{\iagent,1}=1$ and $\PM_{\iagent,2}=0$, we will have $\vv_{\iagent}(\overline{10}_{(2)}+1) = \vv_{\iagent}(3) =1$ while all other components of $\vv_{\iagent}$ are zero. 
 %To simplify notations, we introduce the following vectorized indexing for $\vv$. 

%We let $\vv(\vell)$ to denote the component of $\vv$ corresponds to $\vPM = (\PM_1,\cdots,\PM_{\npm}) =\vell$.  Formally, for any $\iagent\in[\nagent]$, we define $\vv_{\iagent}$ as follows,
%\begin{equation}\nonumber
%\vv_{\iagent}(\vell) = \left\{\begin{array}{lcl}
%1      &      & \text{if } (\PM_{1,\iagent},\cdots,\PM_{\npm,\iagent})=\vell\\
%0      &      & \text{otherwise}
%\end{array}\right..
%\end{equation}

\subsection{Properties of Quota Rules}\label{sec:extra_quota}
The properties of quota rules include :\\
\emph{Anonymity}: each pair of agents play interchangeable roles.\\
\emph{Neutrality}: each pair of propositions are interchangeable.\\
\emph{Independence}: the aggregation of a certain proposition only depends on agents' judgement on this proposition.\\
\emph{Monotonicity}: adding vote to the current winner will never change the winner

\section{Proof of Theorem~\ref{theo:main}}\label{sec:main_proof}
\textbf{Theorem~\ref{theo:main} (Smoothed Likelihood for Doctrinal Paradox)\textbf{. }}  \emph{
Given any strictly positive set $\Pi$ of distributions over $\setvote$, any logic function $f:\setvote\to\{0,1\}$, and any quota rule $r$. For any $\nagent\in\bbZ_{>0}$,
{%\small
\begin{equation}\nonumber
\begin{split}
&\pmaxg=
\left\{\begin{array}{ll}
0      &       \text{if $\kappa_1$ is true}\\
\exp(-\Theta(\nagent))        & \text{otherwise, if $\kappa_2$ is true} \\
\Theta(\nagent^{-1/2})      &   \text{otherwise, if $\kappa_4$ is true}\\
\Theta(1)      &   \text{otherwise}\\
\end{array}\right.\\
%\end{split}
%\end{equation}
%\begin{equation}\nonumber
%\begin{split}
&\pming=
\left\{\begin{array}{ll}
0      &       \text{if $\kappa_1$ is true}\\
\exp(-\Theta(\nagent))        & \text{otherwise, if $\kappa_3$ is true} \\
\Theta(\nagent^{-1/2})      &   \text{otherwise, if $\kappa_4$ is true}\\
\Theta(1)      &   \text{otherwise}\\
\end{array}\right..
\end{split}
\end{equation}}
}
\begin{proof} We present the full proof for Step 3 discussed in the sketch in the main text.

{\bf Step 1: Polyhedra Representation. } We prove the following lemma on the connection between $\calC$ and the doctrinal paradox.
\begin{lemma}[\bf Polyhedra representation of doctrinal paradox]\label{lem:polyhedra}
Given any profile $\vP\in\setvote^{\nagent}$ and any distribution $\pi$ on $\setvote$, any quota rule $r$ and any logical connection function %\lirong{maybe change to more popular name}
$f$, we have the following two statements for doctrinal paradox,
\begin{equation}\nonumber
\begin{split}
%&\exists\,\calH\in\calC,\text{ such that }\HG(\vP)\in\calH\;\;\Leftrightarrow\;\vP \text{ is a doctrinal paradox}\\
(1)\;\;\;&\forall\,\calH\in\calC,\;\HG(\vP)\notin\calH
\Leftrightarrow\;\vP \text{ is not a doctrinal paradox}.\\
(2)\;\;\;&\forall\,\calH\in\calC,\;\pi\notin\calH_{\leq0}
\Leftrightarrow\;\text{all refinements of $\pi$ are consistent}.\\
\end{split}
\end{equation}
\end{lemma}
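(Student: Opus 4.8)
The plan is to prove both equivalences by evaluating the single linear functional $\vc_\ipm\cdot(\cdot)$ on a histogram and on a distribution and matching its sign against the quota rule. Everything is driven by two elementary identities. For any profile $\vP\in\setvote^\nagent$, since $\sum_{\vPM}\HG(\vP)(\vPM)=\nagent$, I get $\vc_\ipm\cdot\HG(\vP)=\threshold_\ipm\nagent-\nagent_\ipm$, where $\nagent_\ipm=\sum_{\vPM\in\setidx_\ipm}\HG(\vP)(\vPM)$. For any distribution $\pi$, since $\|\pi\|_1=1$, I get $\vc_\ipm\cdot\pi=\threshold_\ipm-\prob_\ipm$, where $\prob_\ipm=\sum_{\vPM\in\setidx_\ipm}\pi(\vPM)$. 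Both statements of the lemma then follow by reading off, row by row, when the $\ipm$-th constraint of $\calH_\hvPM$ (respectively of its characteristic cone $\calH_{\hvPM,\le0}$) is satisfied.

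I would prove part (2) first, as it is the cleaner computation. The $\ipm$-th constraint of $\calH_{\hvPM,\le0}$ is $\sign(\hPM_\ipm)\,\vc_\ipm\cdot\pi\le0$, which by the distribution identity reads $\prob_\ipm\ge\threshold_\ipm$ when $\hPM_\ipm=1$ and $\prob_\ipm\le\threshold_\ipm$ when $\hPM_\ipm=0$. Comparing with the distribution-level quota rule, I claim $\pi\in\calH_{\hvPM,\le0}$ if and only if $\hvPM$ is a refinement of $\pi$: on a non-tied coordinate ($\prob_\ipm\ne\threshold_\ipm$) the relevant weak inequality forces $\hPM_\ipm=r(\pi)(\ipm)$, while on a tied coordinate ($\prob_\ipm=\threshold_\ipm$) both inequalities hold with equality, so $\hPM_\ipm$ is unconstrained — this is exactly condition (1) of Definition~\ref{def:breaking}. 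Since $\calC$ consists precisely of the polyhedra of inconsistent $\hvPM$, the left side of (2), ``$\forall\,\calH\in\calC,\ \pi\notin\calH_{\le0}$,'' says that no inconsistent $\hvPM$ refines $\pi$, equivalently that every refinement of $\pi$ is consistent, which is the right side of (2).

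For part (1) I would establish the discrete analogue $\HG(\vP)\in\calH_\hvPM\iff r(\vP)=\hvPM$ and then take a union: $\vP$ is a doctrinal paradox iff $r(\vP)$ is inconsistent iff $\HG(\vP)\in\calH_{r(\vP)}$ with $r(\vP)$ inconsistent iff $\HG(\vP)\in\calH$ for some $\calH\in\calC$; negating gives (1). To match $\calH_\hvPM$ against $r(\vP)$ coordinatewise, the histogram identity turns the $\ipm$-th row into $\nagent_\ipm\ge\threshold_\ipm\nagent-\breaking_\ipm+1$ when $\hPM_\ipm=1$ and $\nagent_\ipm\le\threshold_\ipm\nagent-\breaking_\ipm$ when $\hPM_\ipm=0$. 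Here the offsets $\sign(\hPM_\ipm)\breaking_\ipm$ and $-\bbone(\hPM_\ipm=1)$ in $\vb_\hvPM$ are precisely what convert the quota rule's strict inequality ``$\nagent_\ipm>\threshold_\ipm\nagent$'' and its tie-break value $\breaking_\ipm$ into a single closed half-space, using that $\nagent_\ipm\in\mathbb Z$.

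The hard part will be exactly this discrete step. Unlike the cone in (2), whose boundary is inclusive on both sides and therefore accommodates ties automatically, each half-space of $\calH_\hvPM$ must simultaneously encode a strict inequality and the correct tie-breaking direction $\breaking_\ipm$, which is only possible because $\nagent_\ipm$ is an integer. Care is needed at the boundary — in particular when $\threshold_\ipm\nagent$ is not an integer — to verify that every admissible histogram lands in the polyhedron of exactly one $\hvPM$ and that this $\hvPM$ coincides with $r(\vP)$. Checking this coordinate by coordinate, and combining it with the trivial union argument above, would complete the proof of (1); (2) needs no such integrality bookkeeping and drops out of the cone computation directly.
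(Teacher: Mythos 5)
Your proposal follows the paper's own proof essentially line for line: the same evaluations $\vc_\ipm^{\T}\cdot\HG(\vP)=\threshold_\ipm\nagent-\nagent_\ipm$ and $\vc_\ipm^{\T}\cdot\pi=\threshold_\ipm-\sum_{\vPM\in\setidx_\ipm}\pi(\vPM)$, the same row-by-row matching against the quota rule, the same reduction of both parts to the per-$\hvPM$ equivalences ``$\HG(\vP)\in\calH_{\hvPM}\Leftrightarrow r(\vP)=\hvPM$'' and ``$\pi\in\calH_{\hvPM,\,\leq0}\Leftrightarrow\hvPM$ refines $\pi$,'' and the same union over inconsistent $\hvPM$. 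Your treatment of part (2) is complete and correct, and coincides with the paper's.

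The check you defer in part (1), however --- that every profile with $r(\vP)=\hvPM$ has $\HG(\vP)\in\calH_{\hvPM}$ --- cannot in fact be completed, and this is a genuine gap; notably, it is a gap the paper's proof shares, since the paper simply asserts the displayed equivalences ``according to the definition of quota rule.'' The closed half-space encodes the rule exactly only when the tie-break agrees with the constraint direction: for $\hPM_\ipm=1,\breaking_\ipm=1$ the row reads $\nagent_\ipm\geq\threshold_\ipm\nagent$, and for $\hPM_\ipm=0,\breaking_\ipm=0$ it reads $\nagent_\ipm\leq\threshold_\ipm\nagent$, both exact. In the adverse cases ($\hPM_\ipm=1,\breaking_\ipm=0$, row $\nagent_\ipm\geq\threshold_\ipm\nagent+1$; and $\hPM_\ipm=0,\breaking_\ipm=1$, row $\nagent_\ipm\leq\threshold_\ipm\nagent-1$), integrality of $\nagent_\ipm$ closes the gap only if $\threshold_\ipm\nagent\in\bbZ$; otherwise the polyhedron strictly under-covers the rule's output region. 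Concretely, in the paper's Example~\ref{eg:polyhedra} ($\CL\leftrightarrow\PM_1$, $\vthreshold=(0.25,0.65)$, $\vbreaking=(1,1)$), take $\nagent=10$ and $\nagent_1=6$: then $r(\vP)=(1,0)$, a doctrinal paradox, yet the constraint $\nagent_1\leq\threshold_2\nagent-1=5.5$ fails, so $\HG(\vP)\notin\calH_{(1,0)}$; and $\calH_{(0,1)}$, which requires $\nagent_1\leq1.5$, is also excluded. Hence $\HG(\vP)$ lies in no member of $\calC$ although $\vP$ is a paradox, contradicting the forward direction of part (1) as a literal set identity. What does hold unconditionally --- in your argument and the paper's --- is the implication $\HG(\vP)\in\calH_{\hvPM}\Rightarrow r(\vP)=\hvPM$, so the polyhedra never over-count paradoxes, and the under-counted profiles are confined to unit-width slabs along the hyperplanes $\nagent_\ipm=\threshold_\ipm\nagent$; this is why the asymptotic conclusions of Theorem~\ref{theo:main} survive, but both your part (1) and the paper's Statement~\ref{equ:statement} require this qualification (or a modified definition of $\vb_{\hvPM}$, e.g., replacing the offset $1$ by a quantity smaller than the distance from $\threshold_\ipm\nagent$ to the nearest smaller integer) to be exactly true.
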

\begin{proof}
According to the definition of $\calC$, we only need to prove the following statement for polyhedra:
\begin{statement}\label{equ:statement}
For any profile $\vP\in\setvote^{\nagent}$ and any distribution $\pi$ on $\setvote$, any quota rule $r_{\vthreshold,\vbreaking}$ and any logical connection function $f$,
\begin{equation}\nonumber
\begin{split}
\HG(\vP)\in\calH_{\hvPM}\;\;&\Leftrightarrow\;\;r_{\vthreshold,\vbreaking}(\vP) = \hvPM\;\;\;\;\;\;\;\;\;\;\text{and}\\
\pi\in\calH_{\hvPM,\,\leq0}\;\;\;\;&\Leftrightarrow\;\;\hvPM\text{ is a refinement of }\pi.
\end{split}
\end{equation}%}
\end{statement}

Let $\vPM_{\vell}$ be the corresponding type of vote when the vote on premises is $\vell$. Mathematically, $\vPM_{\vell}\in\setvote$ and $\vPM_{\vell}(\vell)=1$. 
For all $\ipm\in[\npm+1]$, the characterization vector $\vc_{\ipm}$ can be written as,
$\vc_{\ipm} = (\threshold_\ipm-1)\cdot\left(\sum_{\vell\in\setidx_{\ipm}}\vPM_{\vell}\right) + \threshold_\ipm\left(\sum_{\vell\notin\setidx_{\ipm}}\vPM_{\vell}\right)$. Then,
{%\small
\begin{equation}\nonumber
\begin{split}
\vc_{\ipm}^{\T}\cdot\HG(\vP)
=\;& \vc_{\ipm}^{\T}\cdot\left(\sum_{\vell}\HG(\vP)(\vell)\cdot\vPM_{\vell}\right)\\
=\;& (\threshold_\ipm-1)\cdot\left(\sum_{\vell\in\setidx_{\ipm}}\HG(\vP)(\vell)\right) + \threshold_\ipm\left(\sum_{\vell\notin\setidx_{\ipm}}\HG(\vP)(\vell)\right)\\
=\;& n\cdot \threshold_\ipm - \underbrace{\sum_{\vell\in\setidx_{\ipm}}\HG(\vP)(\vell)}_{\text{number of votes for $\PM_\ipm=1$}}.
\end{split}
\end{equation}}
According to the definition of quota rule, we know that,
\begin{equation}\nonumber
\begin{split}
\vc_{\ipm}^{\T}\cdot \HG(\vP) \leq \breaking_\ipm-1 \;\;&\Leftrightarrow\;\;r_{\vthreshold,\vbreaking}(\vP)(\ipm)= 1\;\;\text{and}\\
-\vc_\ipm^{\T}\cdot \HG(\vP) \leq -\breaking_\ipm \;\;&\Leftrightarrow\;\;r_{\vthreshold,\vbreaking}(\vP)(\ipm)= 0\\
\end{split}
\end{equation}
Then, the first part of Statement~\ref{equ:statement} follows by the definition of $\calH_{\hvPM}$. Follow similar procedure as above, we have,
\begin{equation}\nonumber
\begin{split}
\vc_{\ipm}^{\T}\cdot\pi &= \threshold_\ipm - \sum_{\vell\in\setidx_{\ipm}}\pi(\vell)%\;\;\;\;\text{ and thus, }
\end{split}
\end{equation}
Thus, for any distribution $\pi$ over $\setvote$, 
\begin{equation}\nonumber
\begin{split}
\vc_{\ipm}^{\T}\cdot \pi \leq 0 \;\;&\Leftrightarrow\;\;\exists \,\hvPM\in\calT_{\pi} \text{ such that } \hvPM(\ipm)= 1\;\;\text{and}\\
-\vc_{\ipm}^{\T}\cdot \pi \leq 0 \;\;&\Leftrightarrow\;\;\exists \,\hvPM\in\calT_{\pi} \text{ such that } \hvPM(\ipm)= 0\\
\end{split}
\end{equation}
Then, the second part of Statement~\ref{equ:statement} follows by the definition of $\calH_{\hvPM,\,\leq0}$.
\end{proof}

{\bf Step 2: Apply~\citep[Theorem 2]{Xia2021:How-Likely}. } Following~\cite{Xia2021:How-Likely}, for any distribution $\pi$ over $\setvote$ and any $\nagent\in\mathbb N$, the {\em active dimension} of $\calH$ is formally defined as:
$$
\dim_{\calH,\nagent}(\pi)\triangleq
\left\{\begin{array}{ll}
\dim(\calH_{\leq 0})      &      \text{if }\pi\in\calH_{\leq0} \text{ and } \exists\,\vP\in\setvote^{\nagent}\text{ such that } \HG(\vP)\in\calH \\
-\infty      &   \text{otherwise }
\end{array}\right..
$$
We say a polyhedra is active (to $\pi$ at $n$) if and only if $\dim_{\calH,\nagent}(\pi)\neq-\infty$. In words, an active polyhedron requires (1) there exists a profile of $n$ votes, whose histogram is in the polyhedron, and (2) $\pi$ is in the characteristic cone of the polyhedron. In the next lemma, we reveal a relationship between the active dimensions of polyhedra in $\calC$ and the smoothed probability of the doctrinal paradox.

\begin{lemma}[Active Dimension$\,\to\,$Smoothed Probability, a direct application of Theorem 2 in~\citep{Xia2021:How-Likely}]
\label{theo:tech}
Given any closed and strictly positive set $\Pi$ of distributions over $\setvote$, any logical connection function $f:\setvote\to\{0,1\}$ and any quota rule $r$. For any $\nagent\in\mathbb N$,
{%\small
\begin{equation}\nonumber
\begin{split}
&\pmaxg= \left\{\begin{array}{ll}
0     &   \text{if C1 is true}\\
\exp(-\Theta(\nagent))& \text{otherwise, if C2 is true} \\
\max_{\calH\in\calC}\Theta\left(n^{(\dim_{\calH}(\pi)-m)/2}\right) & \text{otherwise}\\
\end{array}\right.\\
%\end{split}
%\end{equation}
%\begin{equation}\nonumber
%\begin{split}
&\pming= \left\{\begin{array}{ll}
0      & \text{if C1 is true}\\
\exp(-\Theta(\nagent)) &\text{otherwise, if C3 is true} \\
\min_{\calH\in\calC}\Theta\left(n^{(\dim_{\calH}(\pi)-m)/2}\right)& \text{otherwise}\\
\end{array}\right.,\\
\end{split}
\end{equation}}
\noindent where C1 is: $\forall\vP\in\setvote^{\nagent}$, $\vP$ is not doctrinal paradox.\\ 
C2 is: $\forall\, \pi\in\CH(\Pi)\text{ and }\forall\,\calH\in\calC,\;\dim_{\calH}(\pi) = -\infty$ \\ 
C3 is: $\exists\, \pi\in\CH(\Pi)\text{ and }\forall\,\calH\in\calC,\;\dim_{\calH}(\pi) = -\infty$.
\end{lemma}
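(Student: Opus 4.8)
The plan is to realize the doctrinal-paradox event as a PMV-in-polyhedra event and then invoke Theorem~2 of~\citep{Xia2021:How-Likely} essentially verbatim. First I would use Lemma~\ref{lem:polyhedra}(1) to rewrite ``$\vP$ is a doctrinal paradox'' as the equivalent event $\HG(\vP)\in\calC$, so that $\pmaxg$ and $\pming$ become the $\sup$ and $\inf$ over $\vpi\in\Pi^{\nagent}$ of ${\Pr}_{\vP\sim\vpi}\!\left(\HG(\vP)\in\calC\right)$. Then I would observe that $\HG(\vP)=\sum_{\iagent=1}^{\nagent}\vv_{\iagent}$ is a sum of $\nagent$ independent vectors, each $\vv_{\iagent}$ drawn from $\pi_{\iagent}$ and supported on the standard-basis set $\setvote$; this is precisely a Poisson multivariate variable on the lattice $\bbZ^{\nvote}$, deterministically confined to the hyperplane $\{\vx:\vone^{\T}\vx=\nagent\}$. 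Identifying the target region $\calC$ (a finite union of at most $2^{\npm+1}$ polyhedra) with the region in Theorem~2 sets up the application.

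The second step is to check the regularity hypotheses of Theorem~2. The covariance of a single one-hot vote with law $\pi$ is $\text{diag}(\pi)-\pi\pi^{\T}$, whose null space is spanned by $\vone$ and which attains the maximal rank $\nvote-1$ exactly when $\pi$ is strictly positive; hence the uniform $\epsilon$ in the strict positivity of $\Pi$ furnishes the uniform non-degeneracy of the covariance (transverse to $\vone$) that the local central-limit estimates in~\citep{Xia2021:How-Likely} demand. Closedness of $\Pi$ guarantees the $\sup$/$\inf$ over $\CH(\Pi)$ are attained, so the case split is well posed. I would also record that each $\calH_{\hvPM}$ has rational defining data (the entries $\threshold_\ipm,\threshold_\ipm-1$ of $\vc_\ipm$ and the integer slacks in $\vb_{\hvPM}$) and that the differences $\ve_i-\ve_j$ generate the full integer lattice inside $\{\vone^{\T}\vx=0\}$, so the lattice-compatibility premise holds and no hidden periodicity escapes the $\Theta(\cdot)$.

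The third step is to transcribe the conclusion. For a fixed center $\pi\in\CH(\Pi)$ and a single $\calH\in\calC$, Theorem~2 outputs $0$ when $\calH$ contains no realizable histogram of $\nagent$ votes, $\exp(-\Theta(\nagent))$ when $\pi\notin\calH_{\leq0}$, and $\Theta\!\left(n^{(\dim_{\calH,\nagent}(\pi)-\nvote)/2}\right)$ otherwise; the exponent is minus one half the codimension of the active characteristic cone, which is exactly how $\dim_{\calH,\nagent}(\pi)$ is defined. Since $\calC$ is a finite union, ${\Pr}(\HG(\vP)\in\calC)=\Theta\!\left(\max_{\calH\in\calC}{\Pr}(\HG(\vP)\in\calH)\right)$ by the union bound together with the trivial lower bound, which merges the per-polyhedron rates into the stated $\max_{\calH}$ form for $\pmaxg$. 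Finally I would match the conditions: C1 is ``$\calC$ is unreachable'' (the $0$ case); C2 says every center in $\CH(\Pi)$ makes every cone inactive, forcing the exponential regime for the $\sup$; and C3 is the existential relaxation giving the exponential regime for the $\inf$.

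I expect the main obstacle to be two-fold. First, pinning down the polynomial exponent: one must confirm that Xia's abstract active codimension equals $\nvote-\dim(\calH_{\leq0})$ even though the PMV is frozen on $\{\vone^{\T}\vx=\nagent\}$. The reason is that each $\vc_\ipm$ carries both a $\threshold_\ipm$-entry and a $(\threshold_\ipm-1)$-entry, hence is never parallel to $\vone$, so the deterministic total does not add to the counted codimension; one must also verify through the reachability clause that lattice and parity effects (e.g.\ impossible ties when $\nagent$ is odd) are absorbed into the $0$ versus polynomial dichotomy. Second, and more delicate, is the $\inf$ direction: unlike the $\sup$, the quantity $\inf_{\vpi}{\Pr}(\HG(\vP)\in\calC)$ is a genuine min-max, since the min-adversary must choose the center $\pi\in\CH(\Pi)$ that suppresses the dominant (largest-dimension) active cone, and $\inf_{\vpi}$ does not commute with $\max_{\calH}$. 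Checking that Theorem~2 of~\citep{Xia2021:How-Likely}, applied to the whole union $\calC$, resolves this min-max and returns the claimed $\min_{\calH}\Theta\!\left(n^{(\dim_{\calH,\nagent}(\pi)-\nvote)/2}\right)$ form is the step where the direct application stops being mechanical.
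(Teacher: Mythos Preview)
Your proposal is correct and follows the same approach as the paper, which treats this lemma literally as a black-box invocation: after Lemma~\ref{lem:polyhedra} turns the paradox event into $\HG(\vP)\in\calC$, the paper simply states Lemma~\ref{theo:tech} as ``a direct application of Theorem~2 in~\citep{Xia2021:How-Likely}'' with no further argument. Your hypothesis-checking (strict positivity giving the full-rank covariance transverse to $\vone$, closedness of $\Pi$, finiteness of $\calC$) and your union-bound reduction for the $\sup$ are exactly the verification one would carry out, and the obstacles you flag---matching the active codimension on the affine slice $\{\vone^{\T}\vx=\nagent\}$ and the non-commutativity of $\inf_{\vpi}$ with $\max_{\calH}$ in the $\pming$ case---are real but are absorbed into the cited Theorem~2 itself (which already handles unions of polyhedra and both the $\sup$ and $\inf$ smoothed likelihoods simultaneously), so the paper does not revisit them.
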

\noindent Note that C1 is the same as $\kappa_1$ in Theorem~\ref{theo:main}. By Lemma~\ref{lem:polyhedra}, we know that $\kappa_2$ (or $\kappa_3$) is very similar with C2 (or C3), which says that all (or some) distributions in $\CH(\Pi)$ are ``far'' from all active polyhedra of doctrinal paradoxes.  %Similarly, we a. 

{\noindent \bf Step 3: Closing the Gap  between Lemma~\ref{theo:tech} and Theorem~\ref{theo:main}.} The only gap left between Lemma~\ref{theo:tech} and Theorem~\ref{theo:main} is that the active dimension $\dim_{\calH,\nagent}(\pi)$ is still unknown, which will be characterized in Step 3.1 and 3.2 as follows.

%The only gap between Lemma~\ref{theo:tech} (including Lemma~\ref{lem:polyhedra}) and Theorem~\ref{theo:main} is that the active dimension $\dim_{\calH}(\pi)$ is still unknown. %We sketch the proof in the following two steps:

{\vspace{2mm}\noindent \bf Step 3.1: determine $\dim(\calH_{\leq 0})$ (the dimensionality of characteristic cones). }

{\vspace{1mm}\noindent \bf Step 3.1.1: the linear independence of $\{\vc_1,\cdots,\vc_{\npm}\}$. }
We first show that $\vc_1,\cdots,\vc_{\npm}$ are linearly independent by contradiction. For any $\ipm\in[\npm]$, we assume that $\vc_{\ipm} = \sum_{\ipm'\neq\ipm}a_{\ipm'}\cdot\vc_{\ipm'}$, where  $a_{\ipm'}$ are real valued coefficients. Because $\vc_{\ipm}$ is not a zero vector and different with any $\vc_{\ipm'}$, there must exist a pair of $\ipm_1',\ipm_2'\in[\npm]\setminus\{\ipm\}$ such that $a_{\ipm_1'}\neq0$ and $a_{\ipm_2'}\neq0$. Let $\vc_{\ipm}{(\xx_1,\xx_2,\vzero)}$ to be the coordinate of $\vc_{\ipm}$ corresponds to $\PM_{\ipm_1'}=\xx_1$, $\PM_{\ipm_2'}=\xx_2$ and all other premises are $0$. Because $\vc_{\ipm}$ has the same value in all components that $\PM_{\ipm}=0$. We should have $\vc_{\ipm}{(0,0,\vzero)} = \vc_{\ipm}{(1,0,\vzero)} = \vc_{\ipm}{(1,1,\vzero)}$, which is contradict with the observation that
{%\small
\begin{equation}\nonumber
\begin{split}
\vc_{\ipm}{(0,0,\vzero)} &= a_{\ipm_1'}\cdot \threshold_{\ipm_1'} +  a_{\ipm_2'}\cdot \threshold_{\ipm_2'} +\sum_{\ipm'\notin\{\ipm_1'.\ipm_2'\}}a_{\ipm'}\cdot \threshold_{\ipm'}\\
\vc_{\ipm}{(1,0,\vzero)} &= a_{\ipm_1'} (\threshold_{\ipm_1'}-1) +  a_{\ipm_2'}\cdot \threshold_{\ipm_2'} +\sum_{\ipm'\notin\{\ipm_1'.\ipm_2'\}}a_{\ipm'}\cdot \threshold_{\ipm'}\\
\vc_{\ipm}{(1,1,\vzero)} &= a_{\ipm_1'} (\threshold_{\ipm_1'}-1) +  a_{\ipm_2'} (\threshold_{\ipm_2'}-1) +\sum_{\ipm'\notin\{\ipm_1'.\ipm_2'\}}a_{\ipm'}\cdot \threshold_{\ipm'}\;\;\;\;\;\;\text{ and thus}\\
\vc_{\ipm}{(0,0,\vzero)}&-\vc_{\ipm}{(1,0,\vzero)} = a_{\ipm_1'}\neq 0\\
\vc_{\ipm}{(1,0,\vzero)}&-\vc_{\ipm}{(1,1,\vzero)} = a_{\ipm_2'}\neq 0.
\end{split}
\end{equation}}
%Thus, $\vc_{\ipm}{(0,0,\vzero)}-\vc_{\ipm}{(1,0,\vzero)} = a_{\ipm_1'}\neq 0$ and $\vc_{\ipm}{(1,0,\vzero)}-\vc_{\ipm}{(1,1,\vzero)} = a_{\ipm_2'}\neq 0$.

{\vspace{1mm}\noindent \bf Step 3.1.2: the dimension of $\calH_{\hvPM}$ when $\kappa_4$ is true. } 
When $\kappa_4$ is true, the logical connection can be denoted as either $\CL\leftrightarrow\PM_{\ipm}$ or $\CL\leftrightarrow\neg\,\PM_{\ipm}$. For both cases, we have $\sign_{0\to -1}(\hPM_{\npm+1})\cdot\vc_{\npm+1} = -\sign_{0\to -1}(\hPM_{\ipm})\cdot\vc_{\ipm}$. When $\kappa_4$ is true, we will have , the following equation must hold in $\calH_{\hvPM,\,\leq0}$.
\begin{equation}\nonumber
\begin{split}
&\vx^{\T}\cdot\left(\sign_{0\to -1}(\hPM_{\npm+1})\cdot\vc_{\npm+1}\right)\\
=\;&\vx^{\T}\cdot\left(\sign_{0\to -1}(\hPM_{\ipm})\cdot\vc_{\ipm}\right) = 0.
\end{split}
\end{equation}
Thus, when $\kappa_4$ is not true, we draw the conclusion that
\begin{equation}\nonumber
\forall\,\calH\in\calC,\;\;\dim(\calH_{\leq0}) = \nvote-1 \;\;\text{ when  $\kappa_4$ is true}.
\end{equation}

%For the cases that $\CL\leftrightarrow\PM_{\ipm}$ or $\CL\leftrightarrow\neg\,\PM_{\ipm}$ but $\kappa_4$ is not true, we know that 

{\vspace{1mm}\noindent \bf Step 3.1.3: the dimension of $\calH_{\hvPM}$ when $\kappa_4$ is not true. }
We first show the linear independence between $\vc_{\npm+1}$ and $\{\vc_1,\cdots,\vc_{\npm}\}$ when $\kappa_4$ is false. We assume that $\vc_{\npm+1} = \sum_{\ipm'\in[\npm]} a_{\ipm'}\cdot\vc_{\ipm'}$, where $a_{\ipm'}$ are real valued coefficients. Because $\vc_{\npm+1}$ is not a zero vector and different with any $\vc_{\ipm'}$ when $\kappa_4$ is not true, there must exist a pair of $\ipm_1',\ipm_2'\in[\npm]\setminus\{\ipm\}$ such that $a_{\ipm_1'}\neq0$ and $a_{\ipm_2'}\neq0$. 
Let $\vc_{\npm+1}{(\xx_1,\xx_2,\vzero)}$ to be the coordinate of $\vc_{\npm+1}$ corresponds to $\PM_{\ipm_1'}=\xx_1$, $\PM_{\ipm_2'}=\xx_2$ and all other premises are $0$. Thus, we have the following observations: %Because $\vc_{\}$ has the same value in all components that $\PM_{\ipm}=0$. We should have $\vc_{\ipm}{(0,0,\vzero)} = \vc_{\ipm}{(1,0,\vzero)} = \vc_{\ipm}{(1,1,\vzero)}$, which is contradict with the observation that
{%\small
\begin{equation}\label{equ:obv}
\begin{split}
\vc_{\npm+1}{(0,0,\vzero)} &= a_{\ipm_1'}\cdot \threshold_{\ipm_1'} +  a_{\ipm_2'}\cdot \threshold_{\ipm_2'} +\sum_{\ipm'\notin\{\ipm_1'.\ipm_2'\}}a_{\ipm'}\cdot \threshold_{\ipm'}\\
\vc_{\npm+1}{(1,0,\vzero)} &= a_{\ipm_1'} (\threshold_{\ipm_1'}-1) +  a_{\ipm_2'}\cdot \threshold_{\ipm_2'} +\sum_{\ipm'\notin\{\ipm_1'.\ipm_2'\}}a_{\ipm'}\cdot \threshold_{\ipm'}\\
\vc_{\npm+1}{(1,1,\vzero)} &= a_{\ipm_1'} (\threshold_{\ipm_1'}-1) +  a_{\ipm_2'} (\threshold_{\ipm_2'}-1) +\sum_{\ipm'\notin\{\ipm_1'.\ipm_2'\}}a_{\ipm'}\cdot \threshold_{\ipm'}\;\;\;\;\;\;\text{ and thus}\\
\vc_{\npm+1}{(0,0,\vzero)}&-\vc_{\npm+1}{(1,0,\vzero)} = a_{\ipm_1'}\neq 0\\
\vc_{\npm+1}{(1,0,\vzero)}&-\vc_{\npm+1}{(1,1,\vzero)} = a_{\ipm_2'}\neq 0.
\end{split}
\end{equation}}

{\noindent \emph{Case 3.1.3.1}}:  $\vc_{\npm+1}{(0,0,\vzero)}-\vc_{\npm+1}{(1,1,\vzero)} = a_{\ipm_1'}+a_{\ipm_2'} \neq 0$. For this case, there must exist at least three different values in $\vc_{\ipm}$, which contradict with the fact that $\vc_{\npm+1}\in\{\threshold_{\npm+1}-1,\threshold_{\npm+1}\}^{\nvote}$. 

{\noindent \emph{Case 3.1.3.2}}:  $\vc_{\npm+1}{(0,0,\vzero)}-\vc_{\npm+1}{(1,1,\vzero)} = a_{\ipm_1'}+a_{\ipm_2'}  = 0$. 
we note that any nonzero value of $a_{\ipm_1'}+a_{\ipm_2'}$ already results in contradiction according to \emph{Case 1.1.1}. Then, the above relationship of $a_{\ipm_1'}+a_{\ipm_2'}=0$ must holds for all nonzero $a_{\ipm'}$s, which implies that there cannot exists more than three non-zero values for $a_{\ipm'}$.  %Thus, we must have $a_{\ipm_1'}=a_{\ipm_2'}=0$ if there are more than three nonzero $a_{\ipm'}$. 
Now, the only case left is $\vc_{\ipm} = a(\vc_{\ipm'_1}-\vc_{\ipm'_2})$, where $a$ is a nonzero real number. %Because $\vc_{\ipm'_1}\in\{\threshold_{\ipm'_1}-1,\threshold_{\ipm'_1}\}^{\nvote}$ and $\vc_{\ipm'_2}\in\{\threshold_{\ipm'_2}-1,\threshold_{\ipm'_2}\}^{\nvote}$, we know that 
According to the observations in Inequalities~\ref{equ:obv}, we know that 
{%\small
\begin{equation}\nonumber
\begin{split}
\vc_{\npm+1}{(0,0,\vzero)} &= a(\threshold_{\ipm_1'} - \threshold_{\ipm_2'})\\
\vc_{\npm+1}{(1,0,\vzero)} &= a(\threshold_{\ipm_1'} - \threshold_{\ipm_2'}-1)\\
\vc_{\npm+1}{(0,1,\vzero)} &= a(\threshold_{\ipm_1'} - \threshold_{\ipm_2'}+1),\\
\end{split}
\end{equation}}
which says that $\vc_{\npm+1}$ has at least three different values and contradict with the fact that $\vc_{\npm+1}\in\{\threshold_{\npm+1}-1,\threshold_{\npm+1}\}^{\nvote}$. 

%Using similar techniques, we can show that $\vc_{\npm+1}$ is linearly independent with $\vc_1,\cdots,\vc_{\nagent}$ except that $\kappa_4$ is true. Then, by the standard conclusion on the dimensionality of polyhedra, we know that $\dim(\calH_{\hvPM,\,\leq0}) = \nvote$ when $\hvPM\in\calC$ if  $\kappa_4$ is false. 

Thus, when $\kappa_4$ is not true, we draw the conclusion that
\begin{equation}\nonumber
\forall\,\calH\in\calC,\;\;\dim(\calH_{\leq0}) = \nvote\;\;\text{ when  $\kappa_4$ is false}.
\end{equation}

{\vspace{2mm}\noindent \bf Step 3.2: determine whether the polyhedra is active. }\\
Given the first part of Lemma~\ref{lem:polyhedra}, we know that  $\calH_{\hvPM}\cap\setvote^{\nagent}\neq\emptyset$ is equivalent to the statement of ``$\exists\,\vP\in\setvote^{\nagent}$ such that $r_{\vthreshold,\vbreaking}(\vP)=\hvPM$''. Given the first part of Lemma~\ref{lem:polyhedra}, we know that  $\pi\in\calH_{\hvPM,\,\leq0}$ is equivalent to the statement of ``$\hvPM$ is a refinement of $\pi$''. Combining the above two statements. We know that polyhedra $\calH_{\hvPM}$ is active if and only if $\hvPM$ is an effective refinement of $\pi$. Then, we know that all polyhedra in $\calC$ is not active if and only if $\pi$ has no effective refinements of doctrinal paradoxes. Thus, we know that C2 (and C3) in Lemma~\ref{theo:tech} is equivalent to $\kappa_2$ (and $\kappa_3$) in Theorem~\ref{theo:main}.
\end{proof}

\section{Additional Numerical Results}\label{sec:add_exp}
In Figure~\ref{fig:2}(b), we present the numerical result for the complex setting introduced in Section~\ref{sec:exp}. 
The common settings for the two additional experiments are:\\
\emph{Logical connection}: conclusion $\CL=1$ if $(\PM_1,\PM_2,\PM_3) \in \{(0,0,0),(0,1,0),(1,1,0)\}$ and $\CL=0$ otherwise.\\
\emph{Breaking criteria}: $\vbreaking = (1,0,1,0)$.\\
Other settings and the results are shown in the Table~\ref{tab:exp}.

\begin{table}[ht]
    \centering\vspace{-1.8mm}
    \resizebox{0.55\textwidth}{!}{
    \begin{tabular}{ccc}
    \toprule
     Name of setting & Majority &  Quota\\ \hline
     Threshold $\vthreshold$  & $(0.5,0.5,0.5,0.5)$ & $(0.2,0.2,0.2,0.2)$  \\%\hline
     $\pmax$   & $\exp(-\Theta(n))$ & $\Theta(1)$\\
     $\pmin$   & $\exp(-\Theta(n))$ & $\exp(-\Theta(n))$\\
    \bottomrule
    \end{tabular}}%\vspace{-2.2mm}
    \vspace{0.3em}
    \caption{Settings and results for the experiments with three premises}\label{tab:exp}
    \vspace{-4.2mm}
\end{table}

\iffalse
\begin{figure*}[ht]
	\centering
	\begin{subfigure}[t]{2.66in}
		\centering
		\includegraphics[width=2.66in]{f4_05.pdf}
		\vspace{-1.5em}
		\caption{{\small Majority setting}}\label{fig:2a}		
	\end{subfigure}
	\hspace{0.1in}
	\begin{subfigure}[t]{2.62in}
		\centering
		\includegraphics[width=2.62in]{f4.pdf}
		\vspace{-1.45em}
		\caption{{\small Quota setting}
		}\label{fig:2b}		
	\end{subfigure}
	\vspace{0.4em}
    \caption{\small{Numerical verification for the complex setting of three premises.}}
	\label{fig:quota}
\end{figure*}
\fi

\begin{figure*}[ht]
\includegraphics[width = \textwidth]{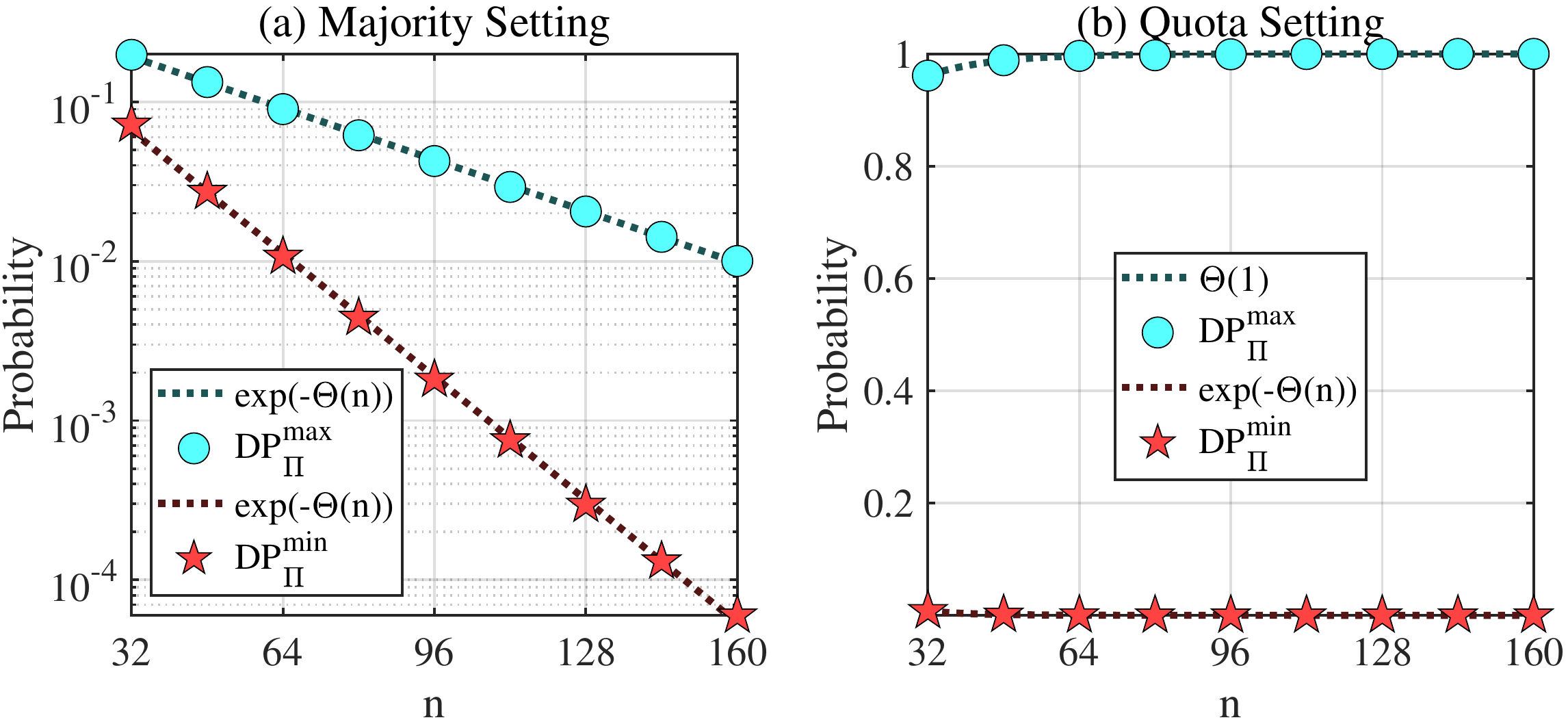}
%\vspace{-1.7em}
\caption{\small  A Numerical verification for the complex setting of three premises. Note that (a) is plotted in log-scale.}
\label{fig:2}
%\vspace{-0.5em}
\end{figure*}

In Table~\ref{tab:fit1} - Table~\ref{tab:fit4}, we present the expressions of the asymptotic curves under different settings. Root mean square error (RMSE) and coefficient of determination $r^2$ shows the fitness between the asymptotic curves and the data points (simulated smoothed probability of doctrinal paradox).

\begin{table}[ht]
    \centering\vspace{-1.8mm}
    \begin{tabular}{cccc}
    \toprule
     Smoothed Probability & Expressions of the asymptotic curves & RMSE & $r^2$ \\\hline
     $\pmaxg$    & $1 - 0.82227\cdot \exp(-0.05941\cdot n)$ & $9.850\times 10^{-3}$ & $0.99883398$ \\
     $\pming$   & $0.25 - 0.27476\cdot\exp(-0.18796\cdot n)$ & $8.337\times10^{-4}$ & $0.99977090$\\
    \bottomrule
    \end{tabular}
    \vspace{0.3em}
    \caption{The asymptotic curves in  Figure~\ref{fig:1}(a) for Example~\ref{eg:theta1}}\label{tab:fit1}
\end{table}

\begin{table}[ht]
    \centering\vspace{-1.8mm}
    \begin{tabular}{cccc}
    \toprule
     Log-Smoothed Probability & Expressions of the asymptotic curves & RMSE & $r^2$ \\\hline
     $\ln\left(\pmaxg\right)$    & $-0.04904\cdot\nagent-1.0349$ & $3.191\times 10^{-2}$ & $0.99922895$ \\
     $\ln\left(\pming\right)$   &  $-0.09964\cdot\nagent-1.0708$ & $5.838\times10^{-2}$ & $0.99937469$\\
    \bottomrule
    \end{tabular}\vspace{0.3em}
    \caption{The asymptotic curves in  Figure~\ref{fig:1}(b) for Example~\ref{eg:new}}
\end{table}

\begin{table}[ht]
    \centering\vspace{-1.8mm}
    \begin{tabular}{cccc}
    \toprule
     Smoothed Probability & Expressions of the asymptotic curves & RMSE & $r^2$ \\\hline
     $\pmaxg$ when $\nagent$ is even   & $0.96124\cdot\left(n+0.12519\right)^{-1/2}$ & $3.501\times 10^{-3}$ & $0.99949065$ \\
     $\pming$ when $\nagent$ is even  & $0.65588\cdot\exp(-0.64539\cdot n)$ & $7.508\times10^{-4}$ & $0.99989252$\\
    \bottomrule
    \end{tabular}\vspace{0.3em}
    \caption{The asymptotic curves in  Figure~\ref{fig:1}(c) for Example~\ref{eg:1}}
\end{table}

\begin{table}[ht]
    \centering\vspace{-1.8mm}
    \begin{tabular}{cccc}
    \toprule
     Smoothed Probability & Expressions of the asymptotic curves & RMSE & $r^2$ \\\hline
     $\ln\left(\pmaxg\right)$    & $-0.02326\cdot n-0.90777$ & $0.01964$ & $0.99967488$ \\
     $\ln\left(\pming\right)$   & $-0.05555\cdot n-0.94285$ & $0.06556$ & $0.99936553$\\
    \bottomrule
    \end{tabular}\vspace{0.3em}
    \caption{The asymptotic curves in  Figure~\ref{fig:2}(a) for the majority setting of three premises}
\end{table}
\begin{table}[ht]
    \centering\vspace{-1.8mm}
    \begin{tabular}{cccc}
    \toprule
     Smoothed Probability & Expressions of the asymptotic curves & RMSE & $r^2$ \\\hline
     $\pmaxg$    & $1-0.41700\cdot \exp(-0.07436\cdot n)$ & $5.528\times 10^{-4}$ & $0.99833417$ \\
     $\pming$   & $0.25484\cdot\exp(-0.10890\cdot n)$ & $1.5780\times10^{-5}$ & $0.99996718$\\
    \bottomrule
    \end{tabular}\vspace{0.3em}
    \caption{The asymptotic curves in  Figure~\ref{fig:2}(b) for the quota setting of three premises}\label{tab:fit4}
\end{table}

\end{document}